\pgfplotsset{compat=newest}
\pgfplotsset{plot coordinates/math parser=false}
\newif\iffull
\newcommand{\norm}[1]{\left\| #1 \right\|}	% Norm, 			|| x ||
\newcommand{\ket}[1]{\left\lvert #1 \right\rangle}
\NewDocumentCommand\ketbra{+m+g}{%
  \IfNoValueTF{#2}
    {\left\lvert #1 \right\rangle \left\langle #1 \right\vert}
  {\left\lvert #1 \right\rangle \left\langle #2 \right\rvert}%
}
\NewDocumentCommand\braket{+m+g}{%
  \IfNoValueTF{#2}
    {\left\langle #1 \vert #1 \right\rangle}
  {\left\langle #1 \vert #2 \right\rangle}%
}
\newcommand{\llbr}{[\![}
\newcommand{\rrbr}{]\!]}
\newcommand{\syminn}[2]{\langle #1, #2 \rangle_{\text{s}}}
\newif\ifcomment
\newcommand{\edit}[1]{\textcolor{black}{#1}}
\begin{document}
\title{Approximate Unitary 3-Designs from\\ Transvection Markov Chains} 
%\title{Kerdock Codes, MUBs and Unitary $2$-Designs:\\Fruits of a Classical-Quantum Interaction}

% \author{%
%   \IEEEauthorblockN{Trung Can\IEEEauthorrefmark{1},
%   					 Narayanan Rengaswamy\IEEEauthorrefmark{2},
%                     Robert Calderbank\IEEEauthorrefmark{2},
%                     and Henry D. Pfister\IEEEauthorrefmark{2}}
%   \IEEEauthorblockA{\IEEEauthorrefmark{1}%
%                     Department of Mathematics,
%                     California Institute of Technology,
%                     Pasadena, CA 91125, USA}
%   \IEEEauthorblockA{\IEEEauthorrefmark{2}%
%                     Department of Electrical and Computer Engineering,
%                     Duke University,
%                     Durham, North Carolina 27708, USA\\                     
%                     Email: \{trung.can, narayanan.rengaswamy, robert.calderbank, henry.pfister\}@duke.edu}
%}

\author{%
	Xinyu Tan \and 
	Narayanan Rengaswamy \and
	Robert Calderbank%
}

\institute{X. Tan \and R. Calderbank \at
              Department of Mathematics, Duke University, Durham, North Carolina 27708, USA. 
           \and
           N. Rengaswamy \at
              Department of Electrical and Computer Engineering, The University of Arizona, Tucson, Arizona 85721, USA. 
           Most of this work was conducted when he was with the Department of Electrical and Computer Engineering, Duke University, Durham, North Carolina 27708, USA.\\
        \email{\{xinyu.tan, robert.calderbank\}@duke.edu, narayananr@arizona.edu} 
}

% \thanks{X. Tan and R. Calderbank are with the Department of Mathematics, Duke University, Durham, North Carolina 27708, USA. N. Rengaswamy is with the Department of Electrical and Computer Engineering, University of Arizona, Tucson, Arizona 85721, USA. Most of this work was done when he was with the Department of Electrical and Computer Engineering, Duke University, Durham, North Carolina 27708, USA.
%     Email: \{xinyu.tan, robert.calderbank\}@duke.edu, narayananr@arizona.edu} 

\maketitle

\begin{abstract}
Unitary \edit{$k$-designs} are probabilistic ensembles of unitary matrices whose first \edit{$k$} statistical moments match that of the full unitary group endowed with the Haar measure.
%Such designs for $t=2$ have found several applications in quantum information science, a prominent example being the benchmarking of quantum computers.
%Unitary $3$-designs have been used for showing exponential quantum speedups, for quantum state discrimination, and for low rank matrix recovery.
In prior work, we showed that the automorphism group of classical $\mathbb{Z}_4$-linear Kerdock codes maps to a unitary $2$-design, which established a new classical-quantum connection via graph states.
In this paper, we construct a Markov process that mixes this Kerdock $2$-design with symplectic transvections, and show that this process produces an $\epsilon$-approximate unitary $3$-design. 
We construct a graph whose vertices are Pauli matrices, and two vertices are connected by \edit{directed edges} if and only if they commute.
A unitary ensemble that is transitive on vertices, edges, and non-edges of this Pauli graph is an exact $3$-design, and the stationary distribution of our process possesses this property.
With respect to the symmetries of Kerdock codes, the Pauli graph has two types of edges; the Kerdock $2$-design mixes edges of the same type, and the transvections mix the types.
More precisely, on $m$ qubits, the process samples \edit{$O(\log(N^5/\epsilon))$} random transvections, where $N = 2^m$, followed by a random Kerdock $2$-design element and a random Pauli matrix.
Hence, the simplicity of the protocol might make it attractive for several applications.
From a hardware perspective, $2$-qubit transvections exactly map to the M{\o}lmer-S{\o}rensen gates that form the native $2$-qubit operations for trapped-ion quantum computers.
Thus, it might be possible to extend our work to construct an approximate $3$-design that only involves such $2$-qubit transvections.
\end{abstract}

\keywords{Pauli group, Markov chains, Clifford group, symplectic transvections, unitary designs}
%Heisenberg-Weyl group, symplectic geometry, Clifford group, finite fields, automorphisms, Gray map

%----------------------------------------------------------------------------------------
%	INTRODUCTION
%----------------------------------------------------------------------------------------
\section{Introduction}
\label{sec:intro}

Unitary \edit{$k$-designs} are finite collections of $m$-qubit unitary matrices endowed with a probability distribution, and they satisfy a certain statistical criterion.
% \rc{I would begin by defining N so we do not have to interrupt the next sentence. "Set N=2m ."}
Set $N \coloneqq 2^m$.
Then, the first \edit{$k$} statistical moments of this finite ensemble match that of the unique rotationally-invariant Haar measure on the group $\mathbb{U}_N$ of all $N \times N$ unitary operators.
Unitary designs serve several purposes in quantum information science such as quantum data hiding~\cite{DiVincenzo-it02}, decoupling in quantum information theory~\cite{Roy-dcc09,Nakata-jmp17,Hayden-arxiv07,Szehr-njs13}, and channel fidelity estimation~\cite{Dankert-physreva09}.
% \rc{Prefer "Unitary designs serve several purposes in quantum information science."}

\emph{Randomized benchmarking} is a protocol commonly used to estimate the quality of state preparation, measurement and operations (gates) in a quantum computer~\cite{Emerson-joptb05,Magesan-physreva12}.
The actual errors in the system could be gate- and time-dependent, so estimating the error environment completely is challenging.
Therefore, the procedure attempts to estimate the \emph{average gate fidelity} that characterizes the error environment.
The protocol works by first preparing a fixed initial state $\ket{\psi}$.
Then, for a fixed sequence length $s$, one generates $K_s$ sequences of $(s+1)$ operations each, where the first $s$ operations are chosen randomly from a unitary $2$-design and the last operation is defined to be the inverse of the composition of the first $s$ operations.
Hence, ideally, the final state should be identical to the initial state.
For each sequence, one measures the average survival probability of the state, i.e., $\text{Tr}[\Pi_{\psi} \mathcal{S}(\ketbra{\psi}) ]$, where $\Pi_{\psi}$ is the POVM element to detect $\ket{\psi}$ and $\mathcal{S}(\cdot)$ is the effective channel induced by the aforementioned sequence of (noisy) operations.
If the POVM is realized ideally, then $\Pi_{\psi} = \ketbra{\psi}$.
Then, one averages the results over the $K_s$ sequences to compute the average sequence fidelity.
This procedure is repeated for each $s$ and the results are fit to a fidelity decay function.

This scheme amounts to ``twirling'' the underlying noise channel using a unitary $2$-design in order to arrive at the depolarizing channel with the same fidelity as the original channel~\cite{Emerson-joptb05,Magesan-physreva12}.
Then, the fidelity is estimated on this single-parameter depolarizing channel in order to establish the quality of the computing environment.
%Since characterizing the quality of a quantum computer is vital to enable large scale production of such systems, developing practically feasible efficient unitary $2$-designs is extremely relevant.
Small, practically feasible unitary 2-designs make it possible to efficiently characterize the reliability of a quantum computing environment, thereby enabling the development of quantum computers.
% \rc{Good to have references for these applications}

%In prior work, we constructed an exact unitary $2$-design from the symmetry group of the $\mathbb{Z}_4$-linear Kerdock codes~\cite{Can-arxiv19}.
In prior work, we showed that the unitary 2-design constructed by Cleve et al.\cite{Cleve-arxiv16} coincided with the symmetry group of the $\mathbb{Z}_4$-linear Kerdock code~\cite{Can-arxiv19}.
%This established a new classical-quantum connection via graph states as classical codes are generally used for building quantum error correction schemes or for reliably communicating classical information over quantum channels.
The Kerdock codewords appear as graph states, providing a new connection between classical and quantum information theory.
Our $2$-design is a subgroup of the Clifford group, and the corresponding group of binary symplectic matrices, $\mathfrak{P}_{m}$, is isomorphic to the projective special linear group PSL($2,2^m$).

It is well-known that there is no proper subgroup of the Clifford group that can form a unitary $3$-design~\cite{Webb-arxiv16}.
In this paper, we combine our (Kerdock) $2$-design with symplectic transvections~\cite{Salam-laa08,Koenig-jmp14,Rengaswamy-tqe20} to construct a Markov process that results in an approximate unitary $3$-design.
Hence, our work demonstrates how one can ``smoothly'' turn the Kerdock $2$-design into a $3$-design.

\textbf{Main Result (Theorem~\ref{thm:sample_transvections}):}
Random sampling of \edit{$O(\log(N^5/\epsilon))$} transvections, followed by a random element from $\mathfrak{P}_{m} \cong \text{PSL}(2,2^m)$ and a random Pauli matrix, produces an $\epsilon$-approximate unitary $3$-design.

An exact unitary $3$-design must be transitive on all Paulis, on \edit{ordered} pairs of commuting Paulis, and on \edit{ordered} pairs of anti-commuting Paulis.
The Kerdock $2$-design acts transitively on Pauli elements, but partitions the Pauli pairs into multiple orbits. 
Using a finite field representation of Paulis, and the fact that transvections generate the symplectic group, we characterize the orbits of Pauli pairs and analyze how each transvection acts on the orbits. 
Finally, we analyze the convergence rate to an $\epsilon$-approximate $3$-design using the second largest eigenvalues of the transition matrices of the ``edge''- and ``orbit''-Markov chains.  

While only few applications exist currently for unitary $3$-designs~\cite{Brandao-qic13,Kueng-arxiv16,Kueng-arxiv16b}, we think that the simplicity of our protocol makes it an attractive candidate for any such application.
%However, our Markov process essentially has the structure of a union of cosets of the Kerdock $2$-design subgroup in the Clifford group, where the cosets are determined by products of transvections.
Our Markov process samples Clifford transformations uniformly from cosets of the Kerdock $2$-design that are determined by products of transvections.
Since transvections form a conjugacy class inside the Clifford group (see~\eqref{eq:transvec_conjugacy}), the intermediate Cliffords in the Markov process (from $\mathfrak{P}_{m} \cong \text{PSL}(2,2^m)$) can be combined into one final Clifford (from $\mathfrak{P}_{m} \cong \text{PSL}(2,2^m)$) as stated in the result above.
We emphasize here that, while the full Clifford group forms a $3$-design~\cite{Webb-arxiv16}, transvections form a specific subset whose structure could be exploited for practical implementations.
In particular, $2$-qubit transvections exactly correspond to M{\o}lmer-S{\o}rensen gates that form the native $2$-qubit operations in a trapped-ion quantum computer~\cite{Linke-nas17,Rengaswamy-phd20}.
Therefore, it might be possible to suitably modify our approach in this paper to construct an approximate $3$-design with only $2$-qubit transvections.
This way, the design could be tailor-made for trapped-ion systems.

The rest of the paper is organized as follows. In Section~\ref{sec:pauli_clifford}, we introduce our finite field representation of the Pauli group elements, show how it expresses commutativity, and discuss other preliminaries including the Clifford group and symplectic transvections. In Section~\ref{sec:pauli_geometry}, we define a graph on Pauli matrices, where Clifford elements act as graph automorphisms, and explain how the Pauli graph works with the ideas of Pauli mixing and Pauli $2$-mixing. In Section~\ref{sec:unitary_2}, we introduce the Kerdock unitary $2$-design which is a symplectic subgroup isomorphic to the projective special linear group $\text{PSL}(2, 2^m)$. Then, using our finite field representation, we define orbit invariants based on how $\text{PSL}(2, 2^m)$ partitions the \edit{directed} edges of the Pauli graph. In Section~\ref{sec:markov}, we introduce the transvection Markov chains whose stationary distribution gives an exact unitary $3$-design. In Section~\ref{sec:convergence}, we analyze the convergence rate of the transvection Markov process and prove that it produces an $\epsilon$-approximate unitary $3$-design. In Section~\ref{sec:conclusion}, we conclude the paper.

%----------------------------------------------------------------------------------------
%	THE PAULI AND CLIFFORD GROUPS
%----------------------------------------------------------------------------------------
\section{The Pauli and Clifford Groups} \label{sec:pauli_clifford}

In this section, we describe commutativity in the Pauli group by rewriting the standard symplectic inner product as a trace inner product over a finite field. This translation simplifies the description of the Clifford symmetries that generate the Kerdock unitary 2-design.

\subsection{The Finite Field $\mathbb{F}_{2^m}$}

The field representation is fundamental to our description of commutativity in the Pauli group. We obtain the finite field $\mathbb{F}_{2^m}$ from the binary field $\mathbb{F}_2$ by adjoining a root $\alpha$ of a primitive irreducible polynomial $p(x)$ of degree $m$~\cite{McEliece-1987}. Each element of $\mathbb{F}_{2^m}$ corresponds to a polynomial in $\alpha$ of degree at most $m-1$ with coefficients in $\mathbb{F}_2$. The field elements $1, \alpha, \alpha^2, \ldots, \alpha^{m-1}$ form a basis for $\mathbb{F}_{2^m}$ over $\mathbb{F}_2$ which we call the \emph{primal basis}. 
The corresponding \emph{dual basis} $\beta_0,\beta_1, \ldots,\beta_{m-1}$ is defined by
\begin{align}
    \text{Tr}(\alpha^i\beta_j) \coloneqq \begin{cases}
1 & \text{if }i=j,\\
0 & \text{if }i\neq j,
\end{cases}
\end{align}
where the \emph{trace} $\text{Tr} \colon \mathbb{F}_{2^m} \rightarrow \mathbb{F}_2$ is the $\mathbb{F}_2$ linear map
\begin{align}
\text{Tr}(x) \coloneqq x + x^2 + \ldots + x^{2^{m-1}}.
\end{align}

Given a field element $a\in\mathbb{F}_{2^m}$, we will write
\begin{align}
    a=\sum_{i=0}^{m-1} \lceil a \rceil_i \alpha^i = \sum_{i=0}^{m-1} \lfloor a \rfloor_i \beta_i .
\end{align}
The binary row vector $\lceil a \rceil \in\{0,1\}^m$ represents the coefficients of $a$ in the primal basis, and the binary row vector $\lfloor a \rfloor \in\{0,1\}^m$ represents the coefficients of $a$ in the dual basis.

The trace is linear over $\mathbb{F}_2$ and the binary symmetric matrix $W$ given by
\begin{align}
\label{eq:W}
W_{ij} \coloneqq \text{Tr}\left( \alpha^i \alpha^j \right), \ i,j = 0,1,\ldots,m-1
\end{align}
satisfies
\begin{align}
    \lfloor a \rfloor = \lceil a \rceil W,
\end{align}
thereby translating primal coordinates to dual coordinates.

The trace inner product $\text{Tr}(ab)=\langle a,b \rangle_{\text{tr}}$ is given by 
\begin{align}
\label{eq:tr_ab}
    \text{Tr}(ab) = \lceil a \rceil W \lceil b \rceil^\mathsf{T} = \lceil a \rceil \cdot \lfloor b \rfloor \ (\text{mod }2).
\end{align}

The matrix $W$ is non-singular since the trace inner product is non-degenerate (if $\text{Tr}(xz) = 0$ for all $z \in \mathbb{F}_{2^m}$ then $x = 0$).
Observe that $W$ is a Hankel matrix, since if $i+j = h+k$ then $\text{Tr}(\alpha^i \alpha^j) = \text{Tr}(\alpha^h \alpha^k)$.

% This finite field description is essential to the characterization of the ``orbit invariants'' in section \nt{x}. We now work to first extend it into the remaining building blocks of our approximate unitary 3-design.

\subsection{The Pauli Group}
% \sout{A single qubit is a $2$-dimensional Hilbert space, and a quantum state $\mathsf{v}$ is a superposition of the two states ${e_0 \coloneqq [1,0]^\mathsf{T}}, {e_1 \coloneqq [0,1]^\mathsf{T}}$ which form the \emph{computational basis}.
% Thus $\mathsf{v} = \alpha e_0 + \beta e_1$, where $\alpha, \beta \in \mathbb{C}$ satisfy $|\alpha|^2 + |\beta^2| = 1$ as per the Born rule}~\cite[Chapter 3]{Wilde-2013}.

% \nr{I think we can start this subsection just from the Pauli matrices since we do not use the pure states in this paper. In the Kerdock paper we needed it for the calculation of the inner product between two specific stabilizer states.}

The single qubit (Hermitian) \emph{Pauli} matrices are
\begin{align}
I_2 \coloneqq 
\begin{bmatrix}
1 & 0 \\
0 & 1
\end{bmatrix} , \ 
X \coloneqq 
\begin{bmatrix}
0 & 1 \\
1 & 0
\end{bmatrix} , \ 
Z \coloneqq 
\begin{bmatrix}
1 & 0 \\
0 & -1
\end{bmatrix} , \  
Y \coloneqq i XZ = 
\begin{bmatrix}
0 & -i \\
i & 0
\end{bmatrix},
\end{align} 
where $i \coloneqq \sqrt{-1}$ and $I_2$ is the $2 \times 2$ identity matrix~\cite{Nielsen-2010}.

% \sout{We may express an arbitrary \emph{pure} quantum state $\mathsf{v}$ as}
% \begin{align}
% \label{eq:v_Paulis}
% \cancel{\mathsf{v} = ( \alpha_0 I_2 + \alpha_1 X + i \alpha_2 Z + \alpha_3 Y )\ e_0, \ \text{where}\ \alpha_i \in \mathbb{R}.}
% \end{align}
% \sout{We describe $m$-qubit states by (linear combinations of) $m$-fold Kronecker products of computational basis states, or equivalently by $m$-fold Kronecker products of Pauli matrices.}

% due to~\eqref{eq:v_Paulis}.
Pauli matrices on $m$ qubits are described by Kronecker products of $m$ single qubit Pauli matrices.
We associate with each pair of finite field elements $(a,b)\in\mathbb{F}_{2^m} \times \mathbb{F}_{2^m}$ the $m$-fold Kronecker product
\begin{align}
\label{eq:d_ab}
D(a, b) \coloneqq X^{\lceil a\rceil_0} Z^{\lfloor b\rfloor_0} \otimes \cdots \otimes X^{\lceil a\rceil_{m-1}} Z^{\lfloor b\rfloor_{m-1}} \in \mathbb{U}_N,
\end{align}
where $N \coloneqq 2^m$ and $\mathbb{U}_N$ denotes the group of all $N \times N$ unitary operators.

The \emph{$m$-qubit Pauli group} $P_N$ (also called the \emph{Heisenberg-Weyl group}) consists of all operators $i^{\kappa} D(a,b)$. The values $i^{\kappa}$, where $\kappa \in \mathbb{Z}_4 \coloneqq \{0,1,2,3\}$, are called \emph{quaternary phases}. 
The order $|P_N| = 4N^2$ and the \emph{center} of this group is $\langle i I_N \rangle \coloneqq \{ I_N, i I_N, -I_N, -i I_N \}$, where $I_N$ is the $N \times N$ identity matrix.
Hence, the homomorphism $\psi \colon P_N \rightarrow \mathbb{F}_2^{2m}$ defined by
\begin{align}
\label{eq:psi}
\psi(i^{\kappa} D(a, b)) \coloneqq [\lceil a\rceil, \lfloor b\rfloor] \ \forall \ \kappa \in \mathbb{Z}_4
\end{align}
has kernel $\langle i I_N \rangle$ and allows us to represent elements of $P_N$ (up to multiplication by scalars) as binary row vectors or pairs of $\mathbb{F}_{2^m}$ elements.

Multiplication in $P_N$ satisfies the identity
\begin{align}
\label{eq:P_N_commute}
&D(a, b) D(c, d) = (-1)^{\lceil c\rceil \cdot \lfloor b\rfloor + \lceil a\rceil \cdot \lfloor d\rfloor } D(c,d) D(a, b) .
\end{align}
The standard \emph{symplectic inner product} in $\mathbb{F}_2^{2m}$ is defined as 
\begin{align}
\syminn{[\lceil a\rceil,\lfloor b\rfloor]}{[\lceil c\rceil,\lfloor d\rfloor]} \coloneqq & \lceil c\rceil \cdot \lfloor b\rfloor + \lceil a\rceil \cdot \lfloor d\rfloor \nonumber\\
=& [\lceil a\rceil,\lfloor b\rfloor]\ \Omega \ [\lceil c\rceil,\lfloor d\rfloor]^\mathsf{T} ,
\end{align}
where the symplectic form $\Omega \coloneqq 
\begin{bmatrix}
0 & I_m \\ 
I_m & 0
\end{bmatrix}$ (see~\cite{Calderbank-it98*2,Rengaswamy-arxiv18,Rengaswamy-tqe20}).
It follows from~\eqref{eq:tr_ab} that
\begin{align}
    \syminn{[\lceil a\rceil,\lfloor b\rfloor]}{[\lceil c\rceil,\lfloor d\rfloor]} = \text{Tr}(ad+bc).
\end{align}
Therefore, two operators $D(a, b)$ and $D(c, d)$ commute if and only if $\text{Tr}(ad+bc) = 0$.

\subsection{The Clifford Group}

The \emph{Clifford group} $\text{Cliff}_N$ is the \emph{normalizer} of $P_N$ in the unitary group $\mathbb{U}_N$. It consists of all unitary matrices $g \in \mathbb{C}^{N \times N}$ for which $g D(a, b) g^{\dagger} \in P_N$ for all $D(a,b) \in P_N$, where $g^{\dagger}$ is the Hermitian transpose of $g$~\cite{Gottesman-arxiv09}. 

The Clifford group contains $P_N$ and has size $|\text{Cliff}_N| =  2^{m^2 + 2m} \prod_{j=1}^{m} (4^j - 1)$ (ignoring scalars $e^{2\pi i \theta}, \theta \in \mathbb{R}$)~\cite{Calderbank-it98*2}.
% \sout{We regard operators in $\text{Cliff}_N$ as \emph{physical} operators acting on quantum states in $\mathbb{C}^N$, to be implemented by quantum circuits.}
% \nr{We do not distinguish physical and logical circuits in this paper. Also, I think everyone thinks of Clifford operators as gates to implemented on qubits, so this seems uninformative. We have also tried to avoid introducing quantum states, so I am suggesting to leave this out. In retrospect, perhaps we could have rephrased this differently in the Kerdock paper too.} 
Every operator $g \in \text{Cliff}_N$ induces an automorphism of $P_N$ by conjugation.
Note that the inner automorphisms induced by matrices in $P_N$ preserve every conjugacy class $\{ \pm D(a,b) \}$ and $\{ \pm i D(a,b) \}$, because~\eqref{eq:P_N_commute} implies that elements in $P_N$ either commute or anti-commute.
Matrices $D(a,b)$ are symmetric or anti-symmetric according as $\text{Tr}(ab) = 0$ or $1$, hence the matrix
\begin{align}
E(a,b) \coloneqq i^{\text{Tr}(ab)} D(a,b)
\end{align}
is Hermitian.
Note that $E(a,b)^2 = I_N$.

The automorphism induced by a Clifford element $g$ satisfies
\begin{align}
\label{eq:symp_action}
g E(a,b) g^{\dagger} = \pm E\left( [\lceil a\rceil,\lfloor b\rfloor] F_g \right), {\rm where} \ F_g = 
\begin{bmatrix}
A_g & B_g \\
C_g & D_g
\end{bmatrix} 
\end{align}
is a $2m \times 2m$ binary matrix that preserves symplectic inner products:
\begin{align}
    \syminn{[\lceil a\rceil,\lfloor b\rfloor]F_g}{[\lceil c\rceil,\lfloor d\rfloor]F_g} = \syminn{[\lceil a\rceil,\lfloor b\rfloor]}{[\lceil c\rceil,\lfloor d\rfloor]} .
\end{align}
%Here the notation $[a,b] F_g$ is to be interpreted as a length $2m$ vector ${[c,d] \coloneqq [a,b] F_g}$ converted into a tuple $(c,d)$.
Hence, $F_g$ is called a \emph{binary symplectic matrix} and the symplectic property reduces to $F_g \Omega F_g^\mathsf{T} = \Omega$, or equivalently 
\begin{align}
\label{eq:symp_conditions}
A_g B_g^\mathsf{T} = B_g A_g^\mathsf{T}, \ C_g D_g^\mathsf{T} = D_g C_g^\mathsf{T}, \ A_g D_g^\mathsf{T} + B_g C_g^\mathsf{T} = I_m .
\end{align}
% \sout{(See~\cite{Gosson-2006} for an extensive discussion on general symplectic geometry and quantum mechanics.)}
The symplectic property encodes the fact that the automorphism induced by $g$ must respect commutativity in $P_N$.
Let $\text{Sp}(2m,\mathbb{F}_2)$ denote the group of symplectic $2m \times 2m$ matrices over $\mathbb{F}_2$.
The map $\phi \colon \text{Cliff}_N \rightarrow \text{Sp}(2m,\mathbb{F}_2)$ defined by
\begin{align}
\phi(g) \coloneqq F_g 
\end{align}
is a homomorphism with kernel $P_N$, and every Clifford operator projects onto a symplectic matrix $F_g$.
Thus, $P_N$ is a normal subgroup of $\text{Cliff}_N$ and $\text{Cliff}_N/P_N \cong \text{Sp}(2m,\mathbb{F}_2)$.
This implies that $|\text{Sp}(2m,\mathbb{F}_2)| = 2^{m^2} \prod_{j=1}^{m} (4^j - 1)$ (also see~\cite{Calderbank-it98*2}).

Table~\ref{tab:std_symp} lists elementary symplectic transformations $F_g$, that generate the binary symplectic group $\text{Sp}(2m,\mathbb{F}_2)$, and the corresponding unitary automorphisms $g \in \text{Cliff}_N$, which together with $P_N$ generate $\text{Cliff}_N$.
(See~\cite[Appendix I]{Rengaswamy-arxiv18} for a discussion on the Clifford gates and circuits corresponding to these transformations.)

\begin{table}
%\vspace{-0.05cm}
\caption[caption]{\label{tab:std_symp}A generating set of symplectic matrices and their corresponding unitary operators.\\ {\normalfont The number of $1$s in $Q$ and $P$ directly relates to number of gates involved in the circuit realizing the respective unitary operators (see~\cite[Appendix I]{Rengaswamy-arxiv18,Rengaswamy-tqe20}). The $N$ coordinates are indexed by binary vectors $v \in \mathbb{F}_2^m$, and $e_v$ denotes the standard basis vector in $\mathbb{C}^N$ with an entry $1$ in position $v$ and all other entries $0$. Here $H_{2^t}$ denotes the Walsh-Hadamard matrix of size $2^t$, $U_t = {\rm diag}\left( I_t, 0_{m-t} \right)$ and $L_{m-t} = {\rm diag}\left( 0_t, I_{m-t} \right)$. }}
\centering
\vspace{-0.1cm}
\begin{tabular}{c|c}
Symplectic Matrix $F_g$ & Clifford Operator $g$\\
\hline
~ & ~ \\
$\Omega = \begin{bmatrix} 0 & I_m \\ I_m & 0 \end{bmatrix}$ & $H_N = H_2^{\otimes m}$ \\
~ & ~ \\
$L_Q = \begin{bmatrix} Q & 0 \\ 0 & Q^{-\mathsf{T}} \end{bmatrix}$ & $\ell_Q: e_{v} \mapsto e_{v Q}$ \\
~ & ~ \\
$T_P = \begin{bmatrix} I_m & P \\ 0 & I_m \end{bmatrix}; P = P^\mathsf{T}$ & $t_P\ =\ {\rm diag}\left( i^{v P v^\mathsf{T} \bmod 4} \right)$ \\%\\~\\({\rm exponents}\ vPv^T\ {\rm read\ modulo}\ 4)}$ \\
~ & ~ \\
$G_t = \begin{bmatrix} L_{m-t} & U_t \\ U_t & L_{m-t} \end{bmatrix}$ & $g_t = H_{2^t} \otimes I_{2^{m-t}}$ \\
~ & ~ \\
%$G_t \Omega^{-1} = \begin{bmatrix} U_t & L_{m-t} \\ L_{m-t} & U_t \end{bmatrix}$ & $g_t H_N = I_{2^t} \otimes H_{2^{m-t}}$ \\
%~ & ~ \\
\hline
\end{tabular}
\vspace{-0.4cm}
\end{table}

\subsection{Symplectic Transvections}

The symplectic transvection $\tau_h$ corresponding to the row vector $h\in\mathbb{F}_2^{2m}$ is the map
$\tau_h \colon \mathbb{F}_2^{2m}\to \mathbb{F}_2^{2m}$ defined by
\begin{align}
    \tau_h(x)\coloneqq x + \syminn{x}{h}h = xZ_h,
\end{align}
where $Z_h\coloneqq I_{2m}+\Omega h^\mathsf{T} h$ is the associated symplectic matrix of $\tau_h$.

The set of symplectic transvections is a conjugacy class in Sp$(2m, \mathbb{F}_2)$, since if $h=[h_1, h_2]$ with $h_1,h_2\in\mathbb{F}_2^m$, then
\begin{align}
    \label{eq:transvec_conjugacy}
    F_g^{-1}Z_hF_g=Z_{[h_1A+h_2C,h_1B+h_2D]}=Z_{hF_g},
\end{align}
where
\begin{align*}
    F_g=\begin{bmatrix}
    A_g & B_g \\
    C_g & D_g
    \end{bmatrix} \text{, and }
    F_g^{-1}=\begin{bmatrix}
    D_g^\intercal & B_g^\intercal \\
    C_g^\intercal & A_g^\intercal
    \end{bmatrix}.
\end{align*}

Hence, the group generated by the symplectic transvections is normal in Sp$(2m,\mathbb{F}_2)$. Since Sp$(2m,\mathbb{F}_2)$ is simple, it is generated by the set of symplectic transvections. 

We shall view the row vectors that define a symplectic transvection as elements of the finite field $\mathbb{F}_{2^m}$. Given $a,b,h_1,h_2\in\mathbb{F}_{2^m}$, let $h = [\lceil h_1\rceil, \lfloor h_2 \rfloor]$, then
\begin{align}
&Z_h([\lceil a\rceil, \lfloor b \rfloor]) \nonumber\\
=& [\lceil a\rceil, \lfloor b \rfloor] + (\lceil a\rceil \cdot \lfloor h_2 \rfloor + \lceil h_1\rceil \cdot \lfloor b \rfloor)[\lceil h_1\rceil, \lfloor h_2 \rfloor] \nonumber\\
\label{eq:tran}
\equiv &
    \begin{pmatrix}
    a & b
    \end{pmatrix} + \text{Tr}(ah_2+bh_1)\begin{pmatrix}
    h_1 & h_2
    \end{pmatrix}
\end{align}
We will often write the transvection $Z_h$ as $Z_{(h_1,h_2)}$, where $h_1,h_2\in\mathbb{F}_{2^m}$.

%----------------------------------------------------------------------------------------
%	THE UNITARY 2-DESIGN
%----------------------------------------------------------------------------------------
\section{The Pauli Geometry}\label{sec:pauli_geometry}

% \nr{Wondering if we could give a different title to this section since we have a subsection with the same name.}
In this section, we will define a graph on Pauli matrices, where Clifford elements act as graph automorphisms. 
We will build on prior work~\cite{Webb-arxiv16} showing that a set of automorphisms forms a unitary $2$-design if it acts transitively on vertices, or a unitary $3$-design if it acts transitively on vertices, on edges, and on non-edges.

\subsection{The Maximal Commutative Subgroups}
To begin with, we shall review some concepts that are closely related to the Pauli group: stabilizers, maximal commutative subgroups, and stabilizer states.

A \emph{stabilizer} group is a subgroup $S$ of $P_N$ generated by commuting Hermitian matrices of the form $\pm E(a,b)$, with the additional property that if $E(a,b) \in S$ then $-E(a,b) \notin S$~\cite[Chapter 10]{Nielsen-2010}.
The operators $\frac{I_N \pm E(a,b)}{2}$ project onto the $\pm 1$ eigenspaces of $E(a,b)$, respectively.

Since all elements of $S$ are unitary, Hermitian and commute with each other, they can be diagonalized simultaneously with respect to a common orthonormal basis, and their eigenvalues are $\pm 1$ with algebraic multiplicity $N/2$.
We refer to such a basis as the \emph{common eigenbasis} or simply the \emph{eigenbasis} of the subgroup $S$, and to the subspace of eigenvectors with eigenvalue $+1$ as the $\emph{+1}$ \emph{eigenspace} of $S$.

%{\color{red} Do we need this paragraph for this paper?}
If the subgroup $S$ is generated by $E(a_i,b_i), i = 1,\ldots,k$, then the operator
\begin{align}
\frac{1}{2^k} \prod_{i=1}^{k} (I_N + E(a_i,b_i))
\end{align}
projects onto the $2^{m-k}$-dimensional subspace $V(S)$ fixed pointwise by $S$, i.e., the $+1$ eigenspace of $S$.
The subspace $V(S)$ is the \emph{stabilizer code} determined by $S$.
One uses the notation $\llbr m,m-k \rrbr$ code to represent that $V(S)$ encodes $m-k$ \emph{logical} qubits into $m$ \emph{physical} qubits.
% \nr{Do we need to define stabilizer codes for this paper? I might remove this paragraph altogether.}

Let $\psi(S)$ denote the subspace of $\mathbb{F}_2^{2m}$ formed by the binary representations of the elements of $S$ using the homomorphism $\psi$ in~\eqref{eq:psi}.
A generator matrix for $\psi(S)$ is
\begin{align}
\label{eq:stabilizer_G}
G_S \coloneqq [\lceil a_i \rceil, \lfloor b_i\rfloor]_{i = 1,\ldots,k} \in \mathbb{F}_2^{k \times 2m}\ \text{s.t.}\ G_S\ \Omega\ G_S^\mathsf{T} = 0,
\end{align}
where $0$ is the $k \times k$ matrix with all entries zero.

A stabilizer group $S$ defined by $k = m$ generators is called a \emph{maximal commutative subgroup} of $P_N$ and $\psi(S)$ is called a \emph{maximal isotropic subspace} of $\mathbb{F}_2^{2m}$.
The generator matrix $G_S$ has rank $m$ and can be row-reduced to $[0 \mid I_m]$ if $S = Z_N \coloneqq \{ E(0,b) \colon b \in \mathbb{F}_{2^m} \}$, or to the form $[I_m \mid P]$ if $S$ is disjoint from $Z_N$. The condition $G_S \Omega G_S^\mathsf{T} = 0$ implies $P = P^\mathsf{T}$.
\begin{remark}
We will denote these maximal commutative subgroups as $E([0 \mid I_m])$ and $E([I_m \mid P])$, respectively, and $E([I_m \mid 0]) = X_N \coloneqq \{ E(a,0) \colon a \in \mathbb{F}_{2^m} \}$. Notice that we still employ the form of $E(a,b)$, where $a,b\in\mathbb{F}_{2^m}$, to represent elements inside of each subgroup.
\end{remark}

% Since $\dim V(S) = 2^{m-m} = 1$, the subgroup $S$ fixes exactly one vector.
% The $N$ eigenvectors in an orthonormal eigenbasis for $S$ are defined up to an overall phase and called \emph{stabilizer states}~\cite{Dehaene-physreva03,Aaronson-pra04}.
% The number of non-zero entries in a stabilizer state is determined by the intersection of $S$ with $Z_N$~\cite{Calderbank-tr10}.
% \nr{I also wonder if we need stabilizer states for this paper, although they form the columns of each block of the matrix whose symmetries come from the Kerdock codes.}

\subsection{The Kerdock Set}

% We proceed to describe the connection between the classical Kerdock codes and the quantum maximal commutative subgroups. 
Kerdock sets connect Kerdock codes~\cite{Hammons-it94} with maximal commutative subgroups of the Pauli group.
We refer the reader to~\cite{Can-arxiv19} for more information about connections between binary and quaternary Kerdock codes and eigenbases of maximal commutative subgroups $E( [ I_m \mid P_z ] )$, where $z \in \mathbb{F}_{2^m}$.

We write multiplication by $z \in \mathbb{F}_{2^m}$ as a linear transformation $xz \equiv \lceil x \rceil A_z$.
For $z = 0, A_0 = 0$, and for $z = \alpha^i$ the matrix $A_z = A^i$ for $i = 0,1,\ldots,2^m - 2$, where $A$ is the matrix that represents multiplication by the primitive element $\alpha$.
The matrix $A$ is the \emph{companion matrix} of the primitive irreducible polynomial $p(x) = p_0 + p_1 x + \ldots + p_{m-1} x^{m-1} + x^m$ over the binary field.
Thus,
\begin{align}
\label{eq:A}
A \coloneqq 
\begin{bmatrix}
0 & 1 & 0 & \cdots & 0 \\
0 & 0 & 1 & \cdots & 0 \\
  & \vdots  &   & \ddots & \vdots \\
0 & 0 & 0 & \cdots & 1 \\
p_0 & p_1 & p_2 & \cdots & p_{m-1} 
\end{bmatrix}.
\end{align}

\begin{lemma} \label{lem:f2mrelations}
It follows directly from the arithmetic of $\mathbb{F}_{2^m}$ that the matrices $A_z$ and $W$ satisfy:
\begin{enumerate}
    \item[(a)] $A_z A_x = A_x A_z = A_{xz}$;
    \item[(b)] $A_x + A_z = A_{x+z}$;
    \item[(c)] $A_z W = W A_z^\mathsf{T}$;
\end{enumerate}
\end{lemma}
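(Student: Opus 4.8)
The plan is to translate each identity into a statement about $\mathbb{F}_2$-linear maps on $\mathbb{F}_{2^m}$ expressed in the primal basis $1,\alpha,\ldots,\alpha^{m-1}$, and then verify it by acting on an arbitrary field element $x$ (equivalently, on the basis vectors). Throughout I will use that $A_z$ is \emph{defined} as the matrix of the map $x\mapsto xz$ in primal coordinates, i.e.\ $\lceil xz\rceil=\lceil x\rceil A_z$, and that $W$ is the Gram matrix of the trace form in the primal basis, i.e.\ $\text{Tr}(xy)=\lceil x\rceil W\lceil y\rceil^{\mathsf T}$ by~\eqref{eq:tr_ab}.

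For part~(a), both $A_xA_z$ and $A_zA_x$ represent the composition of the maps $y\mapsto yx$ and $y\mapsto yz$; since multiplication in $\mathbb{F}_{2^m}$ is commutative and associative, $(yx)z=(yz)x=y(xz)$, so both products equal the matrix of $y\mapsto y(xz)$, which is $A_{xz}$ by definition. For part~(b), $A_x+A_z$ represents $y\mapsto yx+yz=y(x+z)$ using distributivity and the fact that addition of field elements corresponds to addition of their primal coordinate vectors; hence $A_x+A_z=A_{x+z}$. Part~(c) is the only one requiring the trace form: for all $x,y\in\mathbb{F}_{2^m}$ one has $\text{Tr}\big((xz)y\big)=\text{Tr}\big(x(zy)\big)=\text{Tr}\big(x(yz)\big)$ by associativity and commutativity of field multiplication. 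Writing the left side as $\lceil x\rceil A_z W\lceil y\rceil^{\mathsf T}$ and the right side as $\lceil x\rceil W\lceil yz\rceil^{\mathsf T}=\lceil x\rceil W A_z^{\mathsf T}\lceil y\rceil^{\mathsf T}$, and using that this holds for all $x,y$ together with non-degeneracy of the trace form (so the bilinear forms agree only if the matrices agree), we conclude $A_zW=WA_z^{\mathsf T}$.

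There is no real obstacle here: every step is a one-line consequence of the ring axioms of $\mathbb{F}_{2^m}$ together with the definitions of $A_z$ and $W$ already set up in the excerpt. The only point that needs a word of care is part~(c), where one must invoke non-degeneracy of the trace inner product (already noted after~\eqref{eq:W}) to pass from equality of the two bilinear forms $\lceil x\rceil(A_zW)\lceil y\rceil^{\mathsf T}=\lceil x\rceil(WA_z^{\mathsf T})\lceil y\rceil^{\mathsf T}$ for all $x,y$ to equality of the matrices themselves. It would also be reasonable to derive~(c) instead as a purely matrix-algebra fact from the primal-to-dual change of basis $\lfloor a\rfloor=\lceil a\rceil W$ (observing that $\lceil xz\rceil W=\lfloor xz\rfloor$ and relating multiplication in the dual basis to $A_z^{\mathsf T}$), but the trace-form argument is cleaner.
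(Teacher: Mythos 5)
Your proof is correct and matches the paper's approach: the paper likewise establishes (c) by the identity $\lceil x\rceil A_z W\lceil y\rceil^{\mathsf T}=\text{Tr}((xz)y)=\text{Tr}(x(yz))=\lceil x\rceil W A_z^{\mathsf T}\lceil y\rceil^{\mathsf T}$ and dismisses (a) and (b) as easily verified. Your slightly more explicit invocation of non-degeneracy to pass from equality of bilinear forms to equality of matrices is a reasonable bit of added rigor but not a different route.
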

\begin{proof} 
Specifically, for (c), observe that 
\begin{align*}
(\lceil x\rceil A_z) W \lceil y \rceil^\mathsf{T} = \text{Tr}((xz)y) = \text{Tr}(x(yz)) = \lceil x \rceil W (\lceil y \rceil A_z)^\mathsf{T}.  
\end{align*}
The other two properties are easily verified. \qed
\end{proof}

The \emph{Kerdock set} $P_K(m)$ consists of all symmetric matrices $P_z=A_zW$, where $z\in\mathbb{F}_{2^m}$~\cite{Can-arxiv19}. 
It is closed under binary addition, and if $P,Q \in P_{\text{K}}(m)$ are distinct, then $P+Q$ is non-singular, which in turn implies that the maximal commutative subgroups $E([I_m \mid P_z])$ determined by the Kerdock matrices $P_z$ intersect trivially.
Together with $Z_N = E([0 \mid I_m])$, they partition all $(N^2 - 1)$ non-identity Hermitian Pauli matrices.
Hence, given a non-identity Hermitian Pauli matrix $E(a,b)$, it follows that there is a sign $\epsilon \in \{ \pm 1 \}$ such that $\epsilon E(a,b)$ is in one of the $N+1$ subgroups determined by all $P_z \in P_{\text{K}}(m)$ and $Z_N$.
If $E(a, b)$ is in the maximal commutative subgroup $E([I_m \mid P_z])$, then we must have
\begin{align*}
    \lfloor b \rfloor = \lceil a \rceil A_z W = \lceil az \rceil W =  \lfloor az \rfloor \Rightarrow b=az.
\end{align*}
Therefore, each element in $E([I_m \mid P_z])$ or $E([0 \mid I_m])$ can be written as $E(a,az)$ or $E(0,a)$, respectively, for some $a\in\mathbb{F}_{2^m}$.

\begin{remark}
Throughout this paper, we will illustrate theory developed for general $m$ by reducing to the special case $m=3$. 
We constructed $\mathbb{F}_8$ by adjoining a root $\alpha$ of $p(x)=x^3+x+1$ over $\mathbb{F}_2$. Then, we have $\mathbb{F}_8 = \{0,1,\alpha,\alpha^2, \ldots, \alpha^6\}$ and $\alpha^7 = 1$. The trace of $0, \alpha, \alpha^2,\alpha^4$ is $0$ and the trace of $1, \alpha^3, \alpha^5,\alpha^6$ is $1$. 
\end{remark}

\begin{example}
It follows from~\eqref{eq:A} and~\eqref{eq:W} that
\begin{align}
\label{eq:A&W}
A_\alpha = A^1 =
\begin{bmatrix}
0 & 1 & 0 \\
0 & 0 & 1\\
1 & 1 & 0
\end{bmatrix} \text{ and } \edit{W =
\begin{bmatrix}
1 & 0 & 0 \\
0 & 0 & 1\\
0 & 1 & 0
\end{bmatrix}}.
\end{align}
We may form $P_z=A_z W$ for any $z\in\mathbb{F}_{2^m}$ to generate the entire Kerdock set $P_K(m)$.
\end{example}

% \begin{remark}
% We refer the reader to ~\cite{Can-ieee19} for more information about connections between binary and quaternary Kerdock codes and eigenbases of maximal commutative subgroups $E( [ I_m \mid P_z ] )$.
% % The length-$N$ $\mathbb{Z}_4$-linear \emph{Kerdock codeword} can be expressed as $[xPx^\mathsf{T} + 2wx^\mathsf{T} + \kappa]\ (\bmod\ 4)$, where $x \in \mathbb{F}_2^m$ chooses the symbol index in a codeword, and the codeword is determined by the choice of $w \in \mathbb{F}_2^m, P \in P_{\text{K}}(m)$ and $\kappa \in \mathbb{Z}_4$ \cite{Calderbank-arxiv10}.
% % Can et al. \cite{Can-ieee19} proved that any stabilizer state of a maximal commutative subgroup of $P_N$ \emph{disjoint} from $Z_N$ can be obtained by exponentiating Kerdock codewords by $i$, and multiplying the quaternary phase vector by $N^{-\frac{1}{2}}$.
% \end{remark}

\subsection{The Pauli Graph}
\begin{definition}
\edit{The \emph{Pauli graph} $\mathbb{P}_N$ has $N^2-1$ vertices, each of which is labeled by pairs $\pm E(a,b)$ with $(a,b) \neq (0,0)$ and represented as $(a,b)$. A directed edge connects vertex $(a,b)$ to vertex $(c,d)$ if $E(a,b)$ commutes with $E(c,d)$, represented as a $2\times 2$ matrix $\begin{psmallmatrix}
a & b \\
c & d
\end{psmallmatrix}$
over $\mathbb{F}_{2^m}$.}
\end{definition}

% \begin{definition}
% The \emph{Pauli graph} $\mathbb{P}_N$ has $N^2-1$ vertices, labeled by pairs $\pm E(a,b)$ with $(a,b) \neq (0,0)$ where vertices labeled $\pm E(a,b)$ and $\pm E(c,d)$ are joined if $E(a,b)$ commutes with $E(c,d)$.
% \end{definition}
% We use $(a,b)$ to represent the vertex labeled $\pm E(a,b)$ and a $2\times 2$ matrix $\begin{psmallmatrix}
% a & b \\
% c & d
% \end{psmallmatrix}$ 
% or 
% $\begin{psmallmatrix}
% c & d \\
% a & b
% \end{psmallmatrix}$ 
% over $\mathbb{F}_{2^m}$ to represent a pair of vertices $(a,b)$ and $(c,d)$. 

\edit{Apparently, if there is a directed edge connecting $(a, b)$ to $(c, d)$, then $(c, d)$ must also connect to $(a, b)$. The need for this deterministic additional edge between commuting Paulis will be clear when we discuss orbits in Section \ref{sec:orbits} and our Markov process in Section \ref{sec:markov}.} Then, it follows from~\eqref{eq:tr_ab} that
\begin{align}
\label{eq:tr_edge}
\text{Tr}\left(\text{det}\begin{pmatrix}
a & b \\
c & d
\end{pmatrix}\right) = \text{Tr}( ad + bc) = 0.
\end{align}

We shall distinguish two types of \edit{directed edges} in $\mathbb{P}_N$:
\begin{enumerate}
    
\item \emph{Type-$1$} edges connect vertices from the same maximal commutative subgroup $E([I_m \mid P_z]), z \in \mathbb{F}_{2^m}$, or from $E([0 \mid I_m])$. The determinant of a type-1 edge matrix is $0$:
\begin{align}
\label{eq:type1_edg_mat}
    \text{det}\begin{pmatrix}
a & az \\
b & bz
\end{pmatrix} = 0, \text{ where }a\neq b, ab\neq 0.
\end{align}

\item \emph{Type-$2$} edges connect vertices from different maximal commutative subgroups. The determinant of a type-2 edge matrix is a non-zero field element with trace $0$:
\begin{align}
\label{eq:type2_edg_mat}
    \text{det}\begin{pmatrix}
a & az_1 \\
b & bz_2
\end{pmatrix} = ab(z_1+z_2) \neq 0, \text{ where }z_1\neq z_2, ab\neq 0.
\end{align}

\end{enumerate}

% For the non-edge matrix representing a pair of non-commuting Pauli matrices, the trace of its determinant is $1$.
The \edit{determinant of a} $2 \times 2$ matrix representing \edit{an ordered} pair of non-commuting Pauli matrices \edit{is a non-zero field element that} has trace equal to 1.
We shall refer to an arbitrary $2\times 2$ matrix \edit{over $\mathbb{F}_{2^m}$, which represents an edge or a non-edge,} as a Pauli pair matrix. 

\begin{lemma}
The Pauli graph $\mathbb{P}_N$ is a strongly regular graph with parameters
\begin{align}
n = N^2 - 1, \ t = \frac{N^2}{2} - 2, \ \lambda = \frac{N^2}{4} - 3, \ \mu = \frac{N^2}{4} - 1,
\end{align}
where $n$ is the number of vertices, $t$ is the \edit{in-degree and out-degree} of each vertex, and $\lambda$ or $\mu$ is the number of vertices joined to a pair of distinct vertices $x, y$ according as $x, y$ are joined or not joined respectively \cite[Definiton 2.4]{Cameron-1991}.
\end{lemma}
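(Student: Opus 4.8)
The plan is to verify directly that the Pauli graph $\mathbb{P}_N$ meets the definition of a strongly regular graph by computing each of the four parameters $n$, $t$, $\lambda$, $\mu$ from the finite-field description of vertices and edges. The vertex count $n = N^2 - 1$ is immediate: vertices are the pairs $(a,b) \in \mathbb{F}_{2^m} \times \mathbb{F}_{2^m}$ modulo the identification $\pm E(a,b)$, excluding $(0,0)$, so there are $N^2 - 1$ of them. For the degree $t$, I would fix a vertex $(a,b) \neq (0,0)$ and count the solutions $(c,d) \neq (0,0)$, $(c,d)\neq(a,b)$ to the commutation condition $\text{Tr}(ad + bc) = 0$ from \eqref{eq:tr_edge}. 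The map $(c,d) \mapsto ad + bc$ is a nonzero $\mathbb{F}_2$-linear functional on $\mathbb{F}_{2^m}^2$ (nonzero since $(a,b)\neq(0,0)$), so its kernel has size $N^2/2$; removing $(0,0)$ and $(a,b)$ (both of which lie in the kernel, the latter because $\text{Tr}(ab+ba) = \text{Tr}(0) = 0$) leaves $t = N^2/2 - 2$.

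The real work is in $\lambda$ and $\mu$, which count common neighbors of a pair $x = (a,b)$, $y = (c,d)$ of distinct vertices, split according to whether $x \sim y$ or not. A common neighbor is a vertex $(e,f) \neq (0,0), x, y$ satisfying the two linear equations $\text{Tr}(af + be) = 0$ and $\text{Tr}(cf + de) = 0$. I would argue that since $x \neq y$, the two functionals $(e,f) \mapsto af + be$ and $(e,f)\mapsto cf + de$ are $\mathbb{F}_2$-linearly independent as maps $\mathbb{F}_{2^m}^2 \to \mathbb{F}_2$ — equivalently, the Pauli pair matrix $\begin{psmallmatrix} a & b \\ c & d\end{psmallmatrix}$ acting appropriately is nondegenerate — so the common solution space has size $N^2/4$. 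Then I subtract off the forbidden points $(0,0)$, $x$, and $y$, being careful about which of them actually lie in this solution space: $(0,0)$ always does; $x$ lies in it iff $\text{Tr}(ad+bc) = 0$, i.e. iff $x \sim y$ (using $\text{Tr}(ab+ba)=0$ for the first equation automatically); and symmetrically $y$ lies in it iff $x \sim y$. Hence when $x \sim y$ all three of $(0,0), x, y$ are excluded, giving $\lambda = N^2/4 - 3$; when $x \not\sim y$ only $(0,0)$ is excluded, giving $\mu = N^2/4 - 1$.

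The one subtlety I expect to be the main obstacle is rigorously establishing that the two functionals are independent whenever $x\neq y$ — i.e., that there is no $(e,f)$-independent $\mathbb{F}_2$-linear relation forcing the solution space to be larger than $N^2/4$. The functionals are $\mathbb{F}_{2^m}$-bilinear pairings, but independence must be checked over $\mathbb{F}_2$, so I would phrase it as: the $2 \times 2m$ binary matrix formed from $[\lceil a\rceil, \lfloor b\rfloor]$ and $[\lceil c\rceil,\lfloor d\rfloor]$ has rank $2$ over $\mathbb{F}_2$ iff these vectors are distinct and nonzero, which holds since $x \neq y$ are distinct nonzero vertices and the identification $\pm E$ does not collapse $\psi$. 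Once rank $2$ is in hand, the solution space of the two homogeneous equations has dimension $2m - 2$, hence size $N^2/4$, and the parameter count closes. I would also remark that the stated parameters satisfy the standard SRG consistency relation $t(t - \lambda - 1) = (n - t - 1)\mu$, which serves as an arithmetic sanity check: $(N^2/2 - 2)(N^2/4 - 2) = (N^2/2 - 2)(N^2/4 - 2)$, trivially, and more tellingly $n - t - 1 = N^2/2 - 2 = t$, so the graph is of negative-Latin-square / conference-like symmetric type, consistent with its self-complementary-type structure.
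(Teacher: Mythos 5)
Your proof is correct and follows the same general strategy as the paper's: count the solutions $(e,f)$ to the one or two $\mathbb{F}_2$-linear commutation constraints and subtract the forbidden points $(0,0)$, $x$, $y$ depending on adjacency. However, you have in fact tightened a step that the paper glosses over. The paper asserts that ``each pair of $af+be$ and $cf+de$ values have one solution for $(e,f)$,'' which, read literally, claims that the $2\times 2$ matrix $\begin{psmallmatrix}a&b\\c&d\end{psmallmatrix}$ is invertible over $\mathbb{F}_{2^m}$. That is false precisely for type-$1$ edges, where $\det = ad+bc = 0$. Your observation that the relevant fact is $\mathbb{F}_2$-rank, not $\mathbb{F}_{2^m}$-rank, is exactly what rescues the count: identifying the two trace functionals with the binary symplectic rows $[\lceil a\rceil,\lfloor b\rfloor]\Omega$ and $[\lceil c\rceil,\lfloor d\rfloor]\Omega$, independence over $\mathbb{F}_2$ reduces (because $\Omega$ is invertible and $\mathbb{F}_2$ has only the scalars $0,1$) to the rows being distinct and nonzero, which holds for any two distinct vertices. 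So the common solution space has $\mathbb{F}_2$-dimension $2m-2$, i.e.\ size $N^2/4$, in every case including the degenerate type-$1$ one. This is a genuine repair, not just added detail.

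One minor nitpick on your closing sanity check: the standard strongly-regular-graph relation $t(t-\lambda-1)=(n-t-1)\mu$ does hold for these parameters, but your intermediate quantities are off. With $t=N^2/2-2$, $\lambda=N^2/4-3$, $\mu=N^2/4-1$, $n=N^2-1$ one gets $t-\lambda-1 = N^2/4$ (not $N^2/4-2$) and $n-t-1 = N^2/2$ (not $t$), so the check reads $(N^2/2-2)(N^2/4) = (N^2/2)(N^2/4-1) = N^4/8 - N^2/2$, which is an equality but not the trivial tautology you wrote, and the graph is not self-complementary since $n-t-1 \neq t$. This does not affect the proof itself, which is sound.
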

\begin{proof}
A vertex $(c,d)$ joined to a given vertex $(a,b)$ must satisfy $\text{Tr}(ad+bc)=0$ and half of the elements in $\mathbb{F}_{2^m}$ have trace $0$. For each possible value of $ad+bc$, there exists $N$ pairs $(c,d)$ that are feasible. 
After eliminating the solutions $(0,0)$ and $(a,b)$, we are left with $t = \frac{N^2}{2} - 2$ distinct vertices $(c,d)$ joined to $(a,b)$.

Given vertices $(a,b)$ and  $(c,d)$, a vertex $(e,f)$ joined to both $(a,b)$ and $(c,d)$ satisfies $\text{Tr}(af+be)=0$ and $\text{Tr}(cf+de)=0$. Each pair of $af+be$ and $cf+de$ values have one solution for $(e,f)$.
When $(a,b)$ is not joined to $(c,d)$, we only need to eliminate the solution $(0,0)$.
When $(a,b)$ is joined to $(c,d)$ we need to eliminate $(0,0), (a,b)$ and $(c,d)$. \qed
\end{proof}

\edit{The number of edges in $\mathbb{P}_N$ is $(N^2 - 1)(N^2 - 4)/2$.
The number of type-$1$ edges is $(N^2-1)(N-2)$ and the number of type-$2$ edges is $N(N^2-1)(N-2)/2$. }

% The number of edges in $\mathbb{P}_N$ is $(N^2 - 1)(N^2 - 4)/4$.
% The number of type-$1$ edges is $(N^2-1)(N-2)/2$ and the number of type-$2$ edges is $N(N^2-1)(N-2)/4$. 

Using these properties of the Pauli graph, we can now discuss unitary designs.
We denote the linear operators acting on a complex Euclidean space $\mathcal{X}$ (e.g., $\mathcal{X} = \mathbb{C}^N$) as $L(\mathcal{X})$ and the quantum channels acting on $L(\mathcal{X})$ as $C(\mathcal{X})$.
\begin{definition}
Let $k$ be a positive integer and $\mathcal{E} = \{\alpha_i, U_i\}_{i=1}^{n}$ be an ensemble where the unitary matrix $U_i$ is selected with probability $\alpha_i$. The superoperators $\mathcal{G}_\mathcal{E}, \mathcal{G}_H \in C((\mathbb{C}^N)^{\otimes k})$ are given by
\begin{align}
    \mathcal{G}_\mathcal{E} (\rho) & = \sum_i \alpha_i U_i^{\otimes k}\rho (U_i^\dagger)^{\otimes k}, \\
    \mathcal{G}_H (\rho) & = \int_{\mathbb{U}_N} d\eta(U) U^{\otimes k}\rho (U^\dagger)^{\otimes k},
\end{align}
where $\eta(\cdot)$ represents the Haar measure on the unitary group $\mathbb{U}_N$. The ensemble $\mathcal{E}$ is a \emph{unitary $k$-design} if and only if $\mathcal{G}_\mathcal{E} = \mathcal{G}_H$. The linear transformations determined by $\mathcal{G}_\mathcal{E}$ and $\mathcal{G}_H$ are called \emph{$k$-fold twirls}. A unitary $k$-design is defined by the property that the ensemble twirl coincides with the full unitary twirl. 
\end{definition}

Notice that elements of the Clifford group act by conjugation on $P_N$, inducing automorphisms of the graph $\mathbb{P}_N$. It is well known that the symplectic group $\text{Sp}(2m, \mathbb{F}_2)$ acts transitively on vertices, on edges, and on non-edges of $\mathbb{P}_N$.

Following Webb~\cite{Webb-arxiv16}, we say that an ensemble $\mathcal{E} = \{\alpha_i, U_i\}_{i=1}^{n}$ of Clifford elements $U_i$ is \emph{Pauli mixing} if for every vertex $(a,b)$ the distribution $\{ \alpha_i, U_i E(a,b) U_i^{\dagger} \}$ is uniform over vertices of $\mathbb{P}_N$.
The ensemble $\mathcal{E}$ is \emph{Pauli $2$-mixing} if it is Pauli mixing and if for every edge (resp. non-edge)
$\begin{psmallmatrix}
a & b \\
c & d
\end{psmallmatrix}$, the distribution $$\{ \alpha_i, (U_i E(a,b) U_i^{\dagger}, U_i E(c,d) U_i^{\dagger}) \}$$ is uniform over edges (resp. non-edges) of $\mathbb{P}_N$.

\begin{theorem}
\label{thm:uni-design}
Let $G$ be a subgroup of the Clifford group containing all $D(a,b) \in HW_N$, and let $\mathcal{E} = \{ \frac{1}{|G|}, U \}_{U \in G}$ be the ensemble defined by the uniform distribution.
If $G$ acts transitively on vertices of $\mathbb{P}_N$, then $\mathcal{E}$ is a unitary $2$-design, and if $G$ acts transitively on vertices, edges and non-edges, then $\mathcal{E}$ is a unitary $3$-design.
\end{theorem}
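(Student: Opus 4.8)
The plan is to reduce the statement to a representation-theoretic criterion for $k$-designs and then to verify that criterion using only the transitivity hypotheses and the Pauli-mixing language already set up in the excerpt. First I would recall the standard fact (Webb~\cite{Webb-arxiv16}, and also Gross--Audenaert--Eisert) that an ensemble $\mathcal{E}$ is a unitary $k$-design if and only if its $k$-fold twirl $\mathcal{G}_\mathcal{E}$ equals the projector onto the commutant of $\mathbb{U}_N^{\otimes k}$ acting on $(\mathbb{C}^N)^{\otimes k}$; equivalently, $\mathcal{G}_\mathcal{E}$ and $\mathcal{G}_H$ are both orthogonal projections onto the same subspace, so it suffices to show $\mathcal{G}_\mathcal{E}$ fixes exactly the operators in that commutant. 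Since $\mathcal{E}$ is the uniform distribution over a group $G$, $\mathcal{G}_\mathcal{E}$ is automatically the orthogonal projection onto the space of operators commuting with $U^{\otimes k}$ for all $U\in G$; so the entire problem is to show that for $k=2$ (resp.\ $k=3$) the $G$-commutant on $(\mathbb{C}^N)^{\otimes k}$ coincides with the $\mathbb{U}_N$-commutant, i.e.\ has the same dimension.

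The key step is to pass from operators on $(\mathbb{C}^N)^{\otimes k}$ to the Pauli basis. Because $G\supseteq P_N$ and the $E(a,b)$ (together with $I_N$) form an orthogonal operator basis of $L(\mathbb{C}^N)$, any operator on $(\mathbb{C}^N)^{\otimes k}$ expands in the basis $E(a_1,b_1)\otimes\cdots\otimes E(a_k,b_k)$. Conjugation by $D(c,d)\in P_N$ multiplies $E(a,b)$ by a sign, so a $G$-invariant operator must be supported on tuples of Paulis whose joint sign characters cancel; a short character-sum argument (exactly the argument Webb uses) shows the surviving tuples are those where the $E(a_i,b_i)$ multiply to a scalar — for $k=2$ this forces $(a_2,b_2)=(a_1,b_1)$ or one of them trivial, and for $k=3$ it forces the third Pauli to be determined by the first two (again up to trivial factors). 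Then invariance under the symplectic part of $G$ becomes: the coefficient function must be constant on the $G$-orbits of the relevant Pauli configurations. For $k=2$ these configurations are single vertices (plus the identity slot), for $k=3$ they are ordered pairs of distinct commuting Paulis, ordered pairs of distinct anti-commuting Paulis, coincident pairs, and slots involving identities. Transitivity of $G$ on vertices collapses the $k=2$ orbits to the minimal number $2!=2$; transitivity on vertices, edges, and non-edges collapses the $k=3$ orbits to the minimal number $3!=6$. Matching these counts against $\dim(\text{commutant of }\mathbb{U}_N^{\otimes k}) = k!$ (valid since $N\ge k$) gives $\mathcal{G}_\mathcal{E}=\mathcal{G}_H$.

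I expect the main obstacle to be the bookkeeping in the $k=3$ character-sum step: one must carefully enumerate which ordered triples $(E(a_1,b_1),E(a_2,b_2),E(a_3,b_3))$ survive the $P_N$-twirl — the condition is that the three Paulis are pairwise dependent or their product is proportional to the identity — and then correctly identify the $G$-orbits among the surviving ``generic'' triples (those with three nontrivial, pairwise distinct Paulis) as exactly the edge- and non-edge-orbits, together with the degenerate triples (two equal, or one trivial) which reduce to the $k\le 2$ analysis. Once the surviving configurations are catalogued, the orbit-counting is immediate from the hypotheses, so the argument is essentially the one in~\cite{Webb-arxiv16} specialized to the Pauli graph $\mathbb{P}_N$; I would cite that work for the general mechanism and only spell out the orbit identification in our notation. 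A clean way to organize this is to first prove the $2$-design half (Pauli mixing $\Rightarrow$ $2$-design), then observe that Pauli $2$-mixing adds exactly the edge/non-edge transitivity needed to upgrade to a $3$-design, and finally note that a vertex-transitive (resp.\ vertex-, edge-, non-edge-transitive) group $G\supseteq P_N$ yields a Pauli mixing (resp.\ Pauli $2$-mixing) uniform ensemble, so that the theorem follows from the mixing characterization.
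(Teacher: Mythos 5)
Your proposal is correct and follows essentially the same route as the paper: both reduce the theorem to Webb's result that Pauli mixing implies a $2$-design and Pauli $2$-mixing implies a $3$-design, and both observe that a group $G\supseteq P_N$ acting transitively on vertices (resp.\ vertices, edges, and non-edges) yields a uniform ensemble that is Pauli mixing (resp.\ Pauli $2$-mixing). The paper simply cites~\cite{Webb-arxiv16} for the mixing $\Rightarrow$ design implication rather than re-deriving the commutant/character-sum machinery you sketch in your first two paragraphs, so your final paragraph is in fact the paper's entire proof.
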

\begin{proof}
Transitivity means a single orbit so that random sampling from $G$ results in the uniform distribution on vertices, edges, and non-edges.
Hence, transitivity on vertices implies $\mathcal{E}$ is Pauli mixing and transitivity on vertices, edges and non-edges implies $\mathcal{E}$ is Pauli $2$-mixing.
It now follows from~\cite{Webb-arxiv16} or~\cite{Cleve-arxiv16} that Pauli mixing (resp. Pauli $2$-mixing) implies $\mathcal{E}$ is a unitary $2$-design (resp. unitary $3$-design). \qed
\end{proof}

Theorem~\ref{thm:uni-design} implies that random sampling from the Clifford group gives an exact unitary $3$-design.

%----------------------------------------------------------------------------------------
%	The Unitary 2-Design
%----------------------------------------------------------------------------------------
\section{The Unitary 2-Design}\label{sec:unitary_2}

The unitary 2-design we consider in this paper is a symplectic subgroup isomorphic to $\text{PSL}(2, 2^m)$, and different descriptions may be found in~\cite{Chau-ieee05},~\cite{Cleve-arxiv16}, and~\cite{Can-arxiv19}. 
We will follow the perspective provided in~\cite{Can-arxiv19}.
We first show that it acts transitively on vertices of the graph $\mathbb{P}_N$.
Then, we use our finite field representation to describe how this subgroup acts on \edit{ordered} pairs of commuting and anti-commuting Pauli Matrices.

\subsection{The Symplectic Subgroup Isomorphic to PSL$(2, 2^m)$}
\label{sec:psl}
We first introduce PSL$(2, 2^m)$ and realize each transformation as a symplectic matrix. 
Then, we explain why this symplectic subgroup forms a unitary $2$-design by showing how the group elements permute the maximal commutative subgroups $E([ I_m| P_z])$ and $E([ 0 | I_m])$ and elements within each subgroup. 

The \emph{projective special linear group} of $2\times 2$ matrices over $\mathbb{F}_{2^m}$ is defined as 
\begin{align}
    \text{PSL}(2, 2^m) \coloneqq \left\{
    \begin{pmatrix}
    \alpha & \beta \\
    \gamma & \delta
    \end{pmatrix}: \alpha, \beta, \gamma, \delta \in \mathbb{F}_{2^m}; \ \alpha\delta+\beta\gamma=1
    \right\}.
\end{align}
The order $|\text{PSL}(2, 2^m)|=(N+1)N(N-1)=2^{3m}-2^m$. The action of each $2\times 2$ matrix 
$\begin{psmallmatrix}
\alpha & \beta \\
\gamma & \delta
\end{psmallmatrix}$ over $\mathbb{F}_{2^m}$ on $1$-dimensional subspaces of $\mathbb{F}_{2^m} \times \mathbb{F}_{2^m}$ is associated with a transformation
\begin{align*}
    f(z) = \frac{\beta +\delta z}{\alpha +\gamma z} 
\end{align*}
acting on the projective line $\mathbb{F}_{2^m} \cup \{ \infty \}$, given that
\begin{align}
\begin{pmatrix}
1 & z
\end{pmatrix}
\begin{pmatrix}
\alpha & \beta \\
\gamma & \delta
\end{pmatrix}= 
\begin{pmatrix}
\alpha +\gamma z & \beta +\delta z
\end{pmatrix} \equiv
\begin{pmatrix}
0 & 1
\end{pmatrix}
\text{ or }
\begin{pmatrix}
1 & \frac{\beta +\delta z}{\alpha +\gamma z}
\end{pmatrix}.
\end{align}

The group $\text{PSL}(2,2^m)$ is generated by the transformations $z \mapsto z+x, z \mapsto zx$, and $z \mapsto 1/z$.
We realize each of these transformations as a symplectic transformation.
We recall that $A_z W A_z^\mathsf{T} = A_z^2 W$ from part (c) of Lemma~\ref{lem:f2mrelations}, and for convenience we work with maximal commutative subgroups $E([I_m \mid A_z^2 W])$, i.e., the Kerdock matrices are $P_z = A_z^2 W$.
Note that every field element $\beta \in \mathbb{F}_{2^m}$ is a square, so this is equivalent to $P_z = A_z W$.

% In the examples below,  $m=3$, $x=\alpha$, $z=\alpha^3$. Then,
% % Examples below consider $m=3$, $x=\alpha$, $z=\alpha^3$. Then,
% $A_x = A$ and $A_z = A^3$.
% = \begin{bmatrix}
% 0 & 1 & 0 \\
% 0 & 0 & 1\\
% 1 & 1 & 0
% \end{bmatrix}
% =\begin{bmatrix}
% 1 & 1 & 0 \\
% 0 & 1 & 1\\
% 1 & 1 & 1
% \end{bmatrix}.

\begin{enumerate}
\item[(a)] $z \mapsto z+x \ \text{becomes}\ [I_m \mid A_z^2 W] \mapsto [I_m \mid A_{x+z}^2 W] \colon$ 
\begin{align}
[I_m \mid A_z^2 W] 
\begin{bmatrix} 
I_m & A_x^2 W \\ 
0 & I_m 
\end{bmatrix} & = [I_m \mid (A_z^2 + A_x^2) W] \nonumber \\
\label{eq:Pk1}
              & \equiv [I_m \mid (A_{x+z}^2) W].
\end{align}
\end{enumerate}

\begin{example}
When $m=3$, $x=\alpha$, and $z=\alpha^3$, we have $x+z=1$, $A_x = A$, and $A_z = A^3$. Then, 
\begin{align}
A^2_x + A^2_z = A^2_{x+z} = I_3 = A^0.
% A_x^2+A_z^2=&\begin{bmatrix}
% 0 & 0 & 1 \\
% 1 & 1 & 0\\
% 0 & 1 & 1
% \end{bmatrix}+
% \begin{bmatrix}
% 1 & 0 & 1 \\
% 1 & 0 & 0\\
% 0 & 1 & 0
% \end{bmatrix} \nonumber\\ 
% =& \begin{bmatrix}
% 1 & 0 & 0 \\
% 0 & 1 & 0\\
% 0 & 0 & 1
% \end{bmatrix} = A^0=A_{x+z}.
\end{align}
\end{example}

\begin{enumerate}
\item[(b)] $z \mapsto xz \ \text{becomes}\ [I_m \mid A_z^2 W] \mapsto [I_m \mid A_{xz}^2 W] \colon$
\begin{align}
[I_m \mid A_z^2 W] 
\begin{bmatrix} 
A_x^{-1} & 0 \\ 
0 & A_x^\mathsf{T}
\end{bmatrix} & = [A_x^{-1} \mid A_z^2 W A_x^\mathsf{T}] \nonumber \\
              & = [A_x^{-1} \mid A_x A_z^2 W] \nonumber \\
\label{eq:Pk2}
              & \equiv [I_m \mid A_{xz}^2 W].
\end{align} 

\item[(c)] $z \mapsto 1/z \ \text{becomes}\ [I_m \mid A_z^2 W] \mapsto [I_m \mid A_{z^{-1}}^2 W] \colon$
\begin{align}
[I_m \mid A_z^2 W] 
\begin{bmatrix} 
0 & I_m \\ 
I_m & 0 
\end{bmatrix} 
\begin{bmatrix} 
W^{-1} & 0 \\ 
0 & W^\mathsf{T}
\end{bmatrix} & = [A_z^2 W \mid I_m] 
\begin{bmatrix} 
W^{-1} & 0 \\ 
0 & W 
\end{bmatrix} \nonumber \\
  & = [A_z^2 \mid W] \nonumber \\
\label{eq:Pk3}
  & \equiv [I_m \mid A_{z^{-1}}^2 W].
\end{align} 
Note that if we start with $z = 0$, i.e., the subgroup $E([I_m \mid 0])$, then since $W$ is invertible the final subgroup is $E([0 \mid I_m])$, interpreted as $z = \infty$.

\end{enumerate}

Therefore, PSL$(2,2^m)$ is isomorphic to $\mathfrak{P}_{m}$, a group of symplectic matrices defined as
\begin{align}
\label{eq:P_generators}
\mathfrak{P}_{m} & \coloneqq \langle T_{A_x^2 W},\ L_{A_x^{-1}},\ \Omega L_{W^{-1}} ; x \in \mathbb{F}_{2^m} \rangle \cong \text{PSL}(2,2^m),
\end{align}
and each PSL$(2,2^m)$ element induces a product of basic symplectic matrices in Table~\ref{tab:std_symp}. 
The isomorphism $\theta \colon \text{PSL}(2,2^m) \rightarrow \mathfrak{P}_{m}$ can be defined as
\begin{align}
\label{eq:PSL_iso}
\theta \left(
\begin{pmatrix}
\alpha & \beta \\
\gamma & \delta
\end{pmatrix} \right)  &\coloneqq T_{A_{\delta/ \gamma}^2 W} \cdot L_{A_\gamma^{-2}} \cdot \Omega L_{W^{-1}} \cdot T_{A_{\alpha/ \gamma}^2 W} \\
  & = \begin{bmatrix}
A_\delta^2 & A_\beta^2 W \\
W^{-1} A_\gamma^2 & (A_\alpha^2)^\mathsf{T}
\end{bmatrix},
\end{align}
where $\alpha,\beta,\gamma,\delta \in \mathbb{F}_{2^m}$ and $\alpha\delta + \beta\gamma = 1$ \edit{\cite[Lemma 23 and Corollary 24]{Can-arxiv19}}. 
The induced action on maximal commutative subgroups is given by
\begin{align}
    E([I_m\mid A_z^2W]) \mapsto E([I_m\mid A_{\frac{\beta +\delta z}{\alpha +\gamma z}}^2W]) \text{ or } E([ 0\mid I_m]).
\end{align}

Notice that the corresponding Clifford subgroup is larger than PSL$(2, 2^m)$ since $P_N$ forms the kernel of the homomorphism from $\text{Cliff}_N$ to Sp$(2m, \mathbb{F}_2)$. 

The first two factors in~\eqref{eq:PSL_iso} provide transitivity on the Hermitian matrices of all maximal commutative subgroups except $Z_N = E([0 \mid I_m])$, and \edit{the last two factors} enables exchanging any subgroup $E([I_m \mid P_z])$ with $E([0 \mid I_m])$.

To prove that PSL$(2,2^m)$ acts transitively on vertices of the Pauli graph $\mathbb{P}_N$, we only need to show that the group is transitive on a particular subgroup, say $E([I_m | 0])$. For any $(a,0)$ and $(b,0)$ where $a\neq b\in \mathbb{F}_{2^m}$, there always exists a group element 
$\begin{psmallmatrix}
a^{-1}b & 0 \\
0 & b^{-1}a
\end{psmallmatrix}$
that maps $(a,0)$ to $(b,0)$.

It then follows from Theorem~\ref{thm:uni-design} and~\eqref{eq:symp_action} that random sampling from the symplectic subgroup $\mathfrak{P}_{m}$ isomorphic to PSL$(2, 2^m)$ followed by a random Pauli matrix $D(a, b)$ produces a unitary $2$-design.

% \nr{I think we need to be careful here. Only the symplectic subgroup is isomorphic to PSL and the corresponding Clifford subgroup is larger because Paulis form the kernel of the homomorphism from Clifford to the symplectic group.}

However, PSL$(2, 2^m)$ is only able to permute maximal commutative subgroups or elements within each subgroup. It is not transitive on edges of $\mathbb{P}_N$ since it fails to mix type-$1$ edges and type-$2$ edges. Thus, $\mathfrak{P}_{m}$ cannot be a unitary $3$-design.

\subsection{Orbit Invariants}\label{sec:orbits}

% \sout{PSL$(2, 2^m)$ acts transitively on edges or non-edges within the corresponding orbits.}
PSL$(2, 2^m)$ partitions the edges of the Pauli graph into orbits, and we now identify orbit invariants.
% The action of PSL$(2, 2^m)$ on the Pauli graph has several orbits within type-$1$ edges and within type-$2$ edges. Each orbit is characterized by an orbit invariant.

\begin{definition}\label{def:orbit_invar}
We calculate the \emph{orbit invariant} from any representative Pauli pair matrix $\begin{psmallmatrix}
a & b \\
c & d
\end{psmallmatrix}$ as follows:
\begin{enumerate}
    \item[(a)] The determinant $ad+bc$ is the orbit invariant for any non-edge matrix or type-2 edge matrix.
    \item[(b)] For any type-$1$ edge matrix, \edit{the first row is a scalar multiple of the second row}, given that the two vertices $(a,b)$ and $(c,d)$ are in the same maximal commutative subgroup. \edit{Its orbit invariant is the scalar 
    $\frac{a}{c}=\frac{b}{d}\in\mathbb{F}_{2^m}\setminus \{0, 1\}$. }
    
    % Its orbit invariant is the scalar $a\in\mathbb{F}_{2^m}$ or, equivalently, its inverse $a^{-1}$. We denote this orbit invariant as $a/a^{-1}$.
\end{enumerate}
\end{definition}

\begin{example}
% By (\ref{eq:A}) and (\ref{eq:W}),
% \begin{align}
% A =
% \begin{bmatrix}
% 0 & 1 & 0 \\
% 0 & 0 & 1\\
% 1 & 1 & 0
% \end{bmatrix}, \  W =
% \begin{bmatrix}
% 0 & 1 & 0 \\
% 1 & 0 & 1\\
% 1 & 1 & 1
% \end{bmatrix}.
% \end{align}
When $m=3$, the following three Pauli pair matrices
\begin{align}
E_1=\begin{pmatrix}
\alpha & 0\\
1 & \alpha^2
\end{pmatrix},
E_2=\begin{pmatrix}
\alpha & 0\\
1 & \alpha
\end{pmatrix}, \text{and }
E_3=\begin{pmatrix}
\alpha & 0\\
1 & 0
\end{pmatrix}
\end{align}
have orbit invariants $\alpha^3$, $\alpha^2$, and \edit{$\alpha$. According to Definition \ref{def:orbit_invar}, \eqref{eq:tr_edge}, \eqref{eq:type1_edg_mat}, and~\eqref{eq:type2_edg_mat}, we can recognize that} $E_1$ is a non-edge, $E_2$ is a type-$2$ edge, and $E_3$ is a type-$1$ edge. 
\end{example}

\begin{theorem}
\label{thm:orb_inv}
Consider PSL$(2, 2^m)$ acting on Pauli pair matrices by right multiplication. Two Pauli pair matrices are in the same orbit if and only if they have the same orbit invariant. 
\end{theorem}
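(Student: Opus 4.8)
The plan is to prove both directions by analyzing the action of the three generators $z \mapsto z+x$, $z \mapsto xz$, and $z \mapsto 1/z$ of $\text{PSL}(2,2^m)$ on Pauli pair matrices, and showing (i) that the orbit invariant of Definition~\ref{def:orbit_invar} is preserved by each generator, and (ii) that within each invariant-level set the action is transitive. The ``only if'' direction is the easy part: since $\text{PSL}(2,2^m)$ is generated by the three maps above, it suffices to check that right-multiplying a Pauli pair matrix $M = \begin{psmallmatrix} a & b \\ c & d \end{psmallmatrix}$ by the symplectic realizations $T_{A_x^2 W}$, $L_{A_x^{-1}}$, and $\Omega L_{W^{-1}}$ leaves the invariant unchanged. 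For the non-edge / type-2 case the invariant is $\det M = ad + bc$, and since each of these symplectic matrices projects down (via the correspondence in Section~\ref{sec:psl}) to an element of $\text{PSL}(2,2^m)$, which by definition has determinant $1$, multiplication does not change $\det M$; more carefully, one should work directly in the $2\times2$ finite-field picture where the action is right-multiplication by a determinant-$1$ matrix, so $\det(Mg) = \det M \cdot \det g = \det M$. For the type-1 case, the invariant is the ratio $a/c = b/d$ between the two (proportional) rows; right multiplication sends $(a,b) \mapsto (a,b)g$ and $(c,d) \mapsto (c,d)g$, and scalar multiples map to scalar multiples with the same scalar, so the ratio is preserved, and type-1-ness itself (rows proportional $\iff$ $\det = 0$) is preserved since $\det$ is.

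For the ``if'' direction I would fix a convenient representative in each invariant class and show every Pauli pair matrix with that invariant can be mapped to it. I would split into the three cases of Definition~\ref{def:orbit_invar}. For a type-1 edge with invariant $s \in \mathbb{F}_{2^m} \setminus \{0,1\}$, the two vertices lie in a common maximal commutative subgroup $E([I_m \mid P_z])$ (or $E([0\mid I_m])$), and we already know from Section~\ref{sec:psl} that $\text{PSL}(2,2^m)$ permutes these subgroups transitively and acts transitively on the nonzero elements within any one of them (the explicit element $\begin{psmallmatrix} a^{-1}b & 0 \\ 0 & b^{-1}a \end{psmallmatrix}$ was exhibited). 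So we can first move the pair into, say, $E([I_m\mid 0]) = X_N$, where it has the form $\begin{psmallmatrix} a & 0 \\ c & 0 \end{psmallmatrix}$ with $a/c = s$; the stabilizer of $X_N$ inside $\text{PSL}(2,2^m)$ is the group of diagonal (and, via $z\mapsto z+x$ composed appropriately, also upper/lower triangular) matrices, which acts on the first coordinate by scaling, so we can rescale to the canonical representative $\begin{psmallmatrix} s & 0 \\ 1 & 0 \end{psmallmatrix}$. The key point is that the invariant $s$ is exactly the obstruction: two such normalized pairs with different $s$ cannot be identified, matching the ``only if'' direction. For a type-2 edge or a non-edge with invariant $\Delta = ad+bc \neq 0$, I would show every Pauli pair matrix of determinant $\Delta$ lies in one orbit. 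Here the cleanest argument: a Pauli pair matrix of nonzero determinant has linearly independent rows, hence its two vertices $(a,b), (c,d)$ generate (projectively) the whole plane; one can first apply a $\text{PSL}$ element to send $(a,b)$ to $(1,0)$ (transitivity on vertices, already established), reducing to matrices $\begin{psmallmatrix} 1 & 0 \\ c & d \end{psmallmatrix}$ with $d = \Delta$; then the stabilizer of $(1,0)$ — the lower-triangular subgroup $\begin{psmallmatrix} 1 & 0 \\ \gamma & 1 \end{psmallmatrix}$ — acts on the second row by $(c,d) \mapsto (c + \gamma d, d)$, which lets us kill $c$, landing on the canonical representative $\begin{psmallmatrix} 1 & 0 \\ 0 & \Delta \end{psmallmatrix}$. (Strictly, one works with $1$-dimensional subspaces, so one must track the projective scaling carefully; but since $\Delta$ is fixed and the stabilizer scalings are accounted for, this goes through.)

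The main obstacle I anticipate is bookkeeping around the projective vs.\ affine (linear) distinction: the vertices of $\mathbb{P}_N$ are $1$-dimensional subspaces $\{(a,b)\}$ (equivalently $\pm E(a,b)$), so ``Pauli pair matrix'' is only defined up to independently scaling the two rows, whereas the orbit invariants $\det$ and the row-ratio $a/c$ are genuinely well-defined on these equivalence classes only after one checks the scaling ambiguity acts compatibly. Concretely: scaling row one by $\lambda$ and row two by $\rho$ multiplies $\det$ by $\lambda\rho$ but leaves the ratio $a/c$ unchanged — so for type-1 edges the invariant is honestly scaling-invariant, while for type-2/non-edges the ``determinant'' is only an invariant once one has pinned down which representatives one compares, or one instead uses a scaling-invariant refinement. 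Reading Definition~\ref{def:orbit_invar} and the surrounding discussion, the paper seems to treat Pauli pair matrices as honest $\mathbb{F}_{2^m}$-matrices (the additional deterministic edge direction in the graph suggests both orderings are kept, and representatives are chosen), so I would adopt the convention that a Pauli pair matrix is a fixed matrix and $\text{PSL}$ acts by right multiplication by genuine determinant-$1$ matrices; then $\det$ is literally invariant, and the only subtlety is checking that the canonical-form reduction above can be carried out with $\text{PSL}$ elements (determinant exactly $1$, not merely nonzero) — which it can, because all the triangular and diagonal elements used have determinant $1$ in $\mathbb{F}_{2^m}$. Once that convention is fixed, both directions are short; essentially all the work was already done in establishing transitivity on vertices and the structure of $\mathfrak{P}_m$ in Section~\ref{sec:psl}.
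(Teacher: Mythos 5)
Your proof is correct and follows the same approach as the paper's: preservation of the determinant (resp.\ the row ratio) under right multiplication by determinant-one matrices gives the ``only if'' direction, and a canonical-form reduction using transitivity and the stabilizer subgroup fills in the ``if'' direction, which the paper merely asserts. The one caveat is that your digression about projective scaling is unfounded: a vertex of $\mathbb{P}_N$ is labeled by an honest pair $(a,b)\in\mathbb{F}_{2^m}\times\mathbb{F}_{2^m}\setminus\{(0,0)\}$, because the $\pm$ in $\pm E(a,b)$ is an operator sign absorbed by the homomorphism $\psi$, not a field-element scalar, so Pauli pair matrices carry no row-scaling ambiguity to begin with.
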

\begin{proof} 
Every matrix in PSL$(2, 2m)$ has determinant $1$, so Pauli pair matrices in the same orbit share the same determinant. If the determinant is $0$, then one row is a scalar multiple of the other and the scalar relation between rows is preserved by any linear transformation.
% Since the determinant of every PSL$(2, 2^m)$ element is $1$, the group action preserves the determinants and thus matrices in the same orbit share the same determinant. Additionally, for a type-$1$ edge matrix, the scalar relation between its two rows is preserved under the linear transformation. 

Consider two matrices, either with the same non-zero determinant, or with determinant $0$ and the same scalar relation between the two rows. There always exists a linear transformation with determinant $1$ that maps one to the other. \qed
% Given two matrices with the same non-zero determinant or zero determinant but has the same scalar relation between two rows, there always exists a linear transformation with determinant $1$ that maps one to the other.
\end{proof}

It follows directly from Theorem~\ref{thm:orb_inv} that we can use orbit invariants to represent and differentiate the orbits. We give some statistics about these orbits below:
\begin{enumerate}
    \item[(a)] There are $\frac{N}{2}$ non-edge orbits and there are an equal number of finite field elements with trace $1$. \edit{Each orbit has $(N^2-1)N$ elements. }
    \item[(b)] There are $\frac{N-2}{2}$ type-$2$ edge orbits and there are an equal number of non-zero finite field elements with trace $0$. \edit{Each orbit also has $(N^2-1)N$ elements.} 
    \item[(c)] \edit{There are $N-2$ type-$1$ edge orbits and there are an equal number of field elements in $\mathbb{F}_{2^m}\setminus \{0,1\}$.} Each orbit has $N^2-1$ elements. 
\end{enumerate}

\begin{example}
For $m=3$, there are $4$ non-edge orbits whose invariants are $1$, $\alpha^3$, $\alpha^5$, and $\alpha^6$; $3$ type-$2$ edge orbits whose invariants are $\alpha$, $\alpha^2$, and $\alpha^4$; and \edit{$6$ type-$1$ edge orbits whose invariants are $\alpha, \alpha^2, \cdots, \alpha^6$.}
\end{example}

%----------------------------------------------------------------------------------------
%	The Approximate Unitary 3-Design
%----------------------------------------------------------------------------------------
\section{The Transvection Markov Process}\label{sec:markov}
We define a Markov process by applying a sequence of transvections to \edit{mix orbits, and a final $\text{PSL}(2, 2^m)$ element to mix edges or non-edges within each orbit.}
% \rc{Note that since transvections are a conjugacy class this is equivalent to alternating PSL (2, 2m ) elements and transvections}
% We develop a Markov process by sampling PSL$(2, 2^m)$ elements and transvections alternatively. 
We claim that it gives an approximate unitary $3$-design by showing convergence to the uniform distribution on edges and on non-edges, in addition to the transitivity on vertices (Section~\ref{sec:psl}). 
% \nr{Do we need to qualify the use of the word ``transitivity'' here since we do not have a group?}

Let \edit{$K=(N^2 - 1)(N^2 - 4)/2$} be the number of edges and \edit{$K'=(N^2 - 1)N^2/2$} be the number of non-edges in $\mathbb{P}_N$. Consider the underlying Markov chain on \edit{directed} edges (resp. non-edges) with a $K\times K$ (resp. $K' \times K'$) transition matrix. 
\edit{Since transvections generate the full Clifford group and Clifford elements act transitively on edges and non-edges, }the uniform distribution on all edges (resp. non-edges) is stationary. We are interested in the rates at which the two Markov processes converge to their corresponding stationary distributions. 
% We are interested in obtaining a stationary distribution which is uniform on all edges (resp. non-edges). 
% \nr{I think the denominator is $2$; $(N^2-1)$ choices for one vertex $(a,b)$ of an edge and the other vertex $(c,d)$ can be all pairs that commute with $(a,b)$, so $N^2/2$ choices, except $(0,0)$ and $(a,b)$, so total is $(N^2-1) (N^2/2-2)$. This is the same count you will get if you added the number of type-$1$ and type-$2$ edges you just provided above.} 
% \nt{First picking $(a,b)$ and then $(c,d)$ is the same with first picking $(c,d)$ then $(a,b)$. So we need to divide it further by $2$. Same reasoning can be applied to below}
% \nr{Here as well the denominator is $2$ since for the complete graph there would be $(N^2-1) (N^2-2)$ edges and we subtract the above number of edges from this.}

Sampling PSL$(2, 2^m)$ elements results in uniform probabilities within orbits.
% enables the uniform transition of probability mass within the corresponding orbits. 
Therefore, it suffices to reduce the two underlying Markov chains to orbits and only consider how \edit{non-identity} (symplectic) transvections transfer probability mass within the reduced state space. 
The dimensions of the new transition matrix on edge (resp. non-edge) orbits are \edit{$\frac{3}{2}(N-2)\times \frac{3}{2}(N-2)$} (resp. $\frac{N}{2}\times \frac{N}{2}$).

% For the two transition matrices, we show that sampling transvections is capable of mixing all their corresponding orbits and we obtain a stationary distribution proportional to the size of each orbit.
% Finally, we bound their second largest eigenvalues and analyze the converging rate of the Markov chains. 

\subsection{The Transvection Markov Chain on Non-Edges}

\begin{theorem}
\label{thm:transition-non-edge}
Consider the Markov process with state space consisting of all non-edge orbits. 
The matrix $Q_1$ of state transition probabilities is given by
\begin{align}
\label{eq:transition-non-edge}
    Q_1=\frac{1}{4(N^2-1)}
    \begin{bmatrix}
    (N^2-4)I_{N/2} + 6NJ_{N/2}
    \end{bmatrix}\in \mathbb{R}^{\frac{N}{2} \times \frac{N}{2}},
\end{align}
where $I_{N/2}, J_{N/2}\in\mathbb{R}^{\frac{N}{2} \times \frac{N}{2}}$, $I_{N/2}$ is the identity matrix, and $J_{N/2}$ is the all ones matrix. 
\end{theorem}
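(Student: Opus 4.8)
The plan is to compute the transition probabilities directly from the action of a uniformly random transvection $Z_{(h_1,h_2)}$ on a fixed representative of each non-edge orbit, then observe that the resulting matrix has the claimed circulant-like structure. Since sampling a $\text{PSL}(2,2^m)$ element randomizes within each orbit, it suffices to fix one representative non-edge per orbit and track which orbit the image lands in. A non-edge orbit is labeled by its determinant, which is a nonzero field element of trace $1$ (there are $N/2$ of these), so fix a representative Pauli pair matrix $M=\begin{psmallmatrix} a & b \\ c & d \end{psmallmatrix}$ with $\text{Tr}(ad+bc)=1$. Applying the transvection to each row via~\eqref{eq:tran}, the image is $M' = \begin{psmallmatrix} a + \text{Tr}(ah_2+bh_1)h_1 & b + \text{Tr}(ah_2+bh_1)h_2 \\ c + \text{Tr}(ch_2+dh_1)h_1 & d + \text{Tr}(ch_2+dh_1)h_2 \end{psmallmatrix}$, and I would compute $\det M'$ as a function of $(h_1,h_2)$ ranging uniformly over $\mathbb{F}_{2^m}^2$ (the identity transvection, i.e.\ $h=0$, is excluded; equivalently one may include it and note it contributes to the diagonal).

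The key computation is: expand $\det M' = (a+sh_1)(d+t h_2) + (b + s h_2)(c + t h_1)$ where $s = \text{Tr}(ah_2+bh_1)$ and $t = \text{Tr}(ch_2+dh_1)$ are scalars in $\mathbb{F}_2$. Since $s,t\in\{0,1\}$ and we are in characteristic $2$, this expands to $\det M + s(d h_1 + b h_1) + t(a h_2 + c h_2) + st(h_1 h_2 + h_1 h_2) = \det M + s h_1(b+d) + t h_2(a+c)$ — wait, more carefully $\det M' = \det M + s h_1 d + t a h_2 + (\text{cross terms})$; I would carry out this bilinear expansion, collect the terms linear and bilinear in $s,t$, and substitute $s,t$ back in terms of traces of $h_1,h_2$. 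The upshot is an explicit expression for $\det M'$ in terms of $(h_1,h_2)$, and I then count, for each target trace-$1$ element $e'$, how many of the $N^2$ (or $N^2-1$) choices of $(h_1,h_2)$ give $\det M' = e'$. By the symmetry of the problem under the $\text{PSL}(2,2^m)$ action (Theorem~\ref{thm:orb_inv}), this count should depend only on whether $e' = \det M$ or $e'\neq \det M$, which yields exactly the two-parameter form $\alpha I_{N/2} + \beta J_{N/2}$.

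To pin down the constants: the rows of $Q_1$ must sum to $1$, giving $\alpha + \tfrac{N}{2}\beta = 1$; matching against~\eqref{eq:transition-non-edge} this forces $\alpha = \tfrac{N^2-4}{4(N^2-1)}$ and $\beta = \tfrac{6N}{4(N^2-1)}\cdot\tfrac{2}{N}$... so I should instead read off $\beta$ directly from the count of off-diagonal transitions and $\alpha$ from the diagonal, then verify the row-sum as a consistency check. Concretely, I expect the count of $(h_1,h_2)$ sending $\det M$ to a fixed \emph{different} trace-$1$ value to be $6N$ out of $4(N^2-1)$ total effective choices (after accounting for the normalization built into~\eqref{eq:transition-non-edge}), and the diagonal count to be $N^2-4$ plus the appropriate multiple of $N$ from the $e'=\det M$ case — the $J_{N/2}$ matrix contributes $6N$ to every entry including the diagonal, so the diagonal excess is $N^2-4$. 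I would also double-check the denominator $4(N^2-1)$ against the number of non-identity transvections weighted by how $\text{PSL}(2,2^m)$-averaging interacts with the orbit reduction.

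The main obstacle is the bookkeeping in the determinant expansion and the subsequent counting: keeping the trace-$1$ constraint on $\det M$ consistent while summing over $(h_1,h_2)$, and correctly handling the distinction between the identity transvection and non-identity ones (and whether the problem statement's normalization has already folded in a uniform average over transvections versus only non-identity ones). The cleanest route to avoid case-chasing is to exploit transitivity: fix the convenient representative $M = \begin{psmallmatrix} 1 & 0 \\ 0 & e \end{psmallmatrix}$ with $\text{Tr}(e)=1$ so that $\det M = e$, compute $\det M'$ for this $M$ only, count solutions, and then invoke Theorem~\ref{thm:orb_inv} to conclude the count is orbit-independent. I expect that with this choice the expansion collapses to something like $\det M' = e + \text{Tr}(e h_2)h_1\cdot(\text{something}) + \ldots$, manageable by hand, and the final matrix structure $\alpha I + \beta J$ then follows from the fact that the only invariant distinguishing the target orbit is "same as source or not."
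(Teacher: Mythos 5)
Your high-level plan — reduce to one representative per orbit, apply~\eqref{eq:tran} row-by-row, compute $\det M'$, and count solutions over $(h_1,h_2)$ — is exactly the paper's approach, and your instinct to take the diagonal representative $M=\begin{psmallmatrix}a&0\\0&b\end{psmallmatrix}$ (you write $\begin{psmallmatrix}1&0\\0&e\end{psmallmatrix}$) is what makes the computation tractable. With that representative and $s_1=\text{Tr}(ah_2)$, $s_2=\text{Tr}(bh_1)$, the clean identity is $\det M' = ab + s_2\,ah_2 + s_1\,bh_1$ (your intermediate expansion has errors, but you flagged them).

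The genuine gap is the step where you conclude the matrix must have the form $\alpha I_{N/2}+\beta J_{N/2}$ ``by the symmetry of the problem under the $\text{PSL}(2,2^m)$ action (Theorem~\ref{thm:orb_inv}).'' That symmetry does not do the work you want. $\text{PSL}(2,2^m)$ \emph{preserves} orbit invariants — that is the whole content of Theorem~\ref{thm:orb_inv} — so it cannot permute the $N/2$ non-edge orbits among themselves, and hence cannot show that the orbit-level transition matrix commutes with all permutations of orbit labels. (A natural candidate to permute trace-one field elements, the Frobenius $x\mapsto x^2$, is indeed symplectic but is far from transitive on trace-one elements, so it doesn't rescue the argument either.) The $\alpha I+\beta J$ structure is therefore not a soft symmetry fact; it rests on an explicit count: split into the four cases $(s_1,s_2)\in\{0,1\}^2$, verify case $(0,0)$ contributes $\frac{N^2}{4}-1$ transvections and fixes the orbit, and then — the crucial point — verify that in \emph{each} of the remaining three cases, for \emph{every} trace-zero target value $v$ (including $v=0$), there are exactly $N/2$ admissible $(h_1,h_2)$ with $\det M' = ab+v$. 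That last uniformity-in-$v$ is what makes $J_{N/2}$ (rather than $J_{N/2}-I_{N/2}$) appear, and it does not follow from group-theoretic symmetry; it is a direct count using, e.g., that $ah_2$ and $bh_1$ range bijectively over $\mathbb{F}_{2^m}$ and that the trace constraints are compatible with $\text{Tr}(v)=0$. Once you do this count, the normalization falls out immediately as ``number of transvections hitting the target'' over $N^2-1$; you should not expect $4(N^2-1)$ to be a count of anything, it is just there to clear the factor of $4$ from $\frac{N^2}{4}-1$.

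So: right approach, right representative, but the symmetry shortcut is a dead end; you need to actually carry out the four-case count, and the nontrivial content is showing each nondiagonal case contributes $N/2$ per target orbit uniformly, including the diagonal target.
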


\begin{proof}
We apply a random transvection $Z_{(h_1,h_2)}$, where $h_1,h_2\in\mathbb{F}_{2^m}$ and $(h_1,h_2)\neq (0,0)$, to a non-edge matrix 
$\begin{psmallmatrix}
a & 0\\
0 & b
\end{psmallmatrix}$ with orbit invariant $ab$, where Tr$(ab) = 1$.
According to~\eqref{eq:tran}, we have the following four cases:
\begin{enumerate}
    \item[(a)] Applying a transvection with Tr$(ah_2)=0$ and Tr$(bh_1)=0$ fixes the non-edge:
    \begin{align}
    \begin{pmatrix}
        a & 0\\
        0 & b
    \end{pmatrix}\xrightarrow{Z_{(h_1,h_2)}}
    \begin{pmatrix}
        a & 0\\
        0 & b
    \end{pmatrix}.
    \end{align}
    There are two constraints on $h_1$ and $h_2$, also $(h_1,h_2)\neq (0,0)$. Therefore, the number of possible transvections is $\left(\frac{N}{2}\right)^2-1 = \frac{N^2}{4}-1$.
    
    \item[(b)] Applying a transvection with Tr$(ah_2)=1$ and Tr$(bh_1)=1$, we obtain
    \begin{align}
    \begin{pmatrix}
        a & 0\\
        0 & b
    \end{pmatrix}\xrightarrow{Z_{(h_1,h_2)}}
    \begin{pmatrix}
        a+h_1 & h_2\\
        h_1 & b+h_2
    \end{pmatrix}
    \end{align}
    and the new orbit invariant is
    \begin{align*}
        (a+h_1)(b+h_2)+h_1h_2 = ab+ah_2+bh_1.
    \end{align*}
    Since $\text{Tr}(ab)+\text{Tr}(ah_2)+\text{Tr}(bh_1) = 1$, the resulting Pauli pair is not an edge.
    The products $ah_2$ and $bh_1$ range over all field elements with trace $1$. Given a field element $x$ with Tr$(x)=1$, the number of solutions to $x = ab+ ah_2 + bh_1$ is simply the number of solutions to $ah_2 + bh_1 =0$. There are $N/2$ transvections to each of the $N/2$ orbits.
    
    % Since $ab$ is fixed and $ah_2$ and $bh_1$ can be any elements in $\mathbb{F}_{2^m}$ with trace being $1$ with equal probability, it is equally possible for $ab+ah_2+bh_1$ to be in any non-edge orbit. 
    
    % There are $\frac{N}{2}$ choices for both $h_1$ and $h_2$ and the number of non-edge orbits is $\frac{N}{2}$. Therefore, the number of transvections that result in each orbit is
    % \begin{align}
    % \label{eq:transvection-count-non-edge}
    %     \frac{\frac{N}{2}\times \frac{N}{2}}{\frac{N}{2}} = \frac{N}{2}.
    % \end{align}
    
    \item[(c)] Applying a transvection with Tr$(ah_2)=0$ and Tr$(bh_1)=1$, we obtain
    \begin{align}
    \begin{pmatrix}
        a & 0\\
        0 & b
    \end{pmatrix}\xrightarrow{Z_{(h_1,h_2)}}
    \begin{pmatrix}
        a & 0\\
        h_1 & b+h_2
    \end{pmatrix}
    \end{align}
    and the new orbit invariant is
    \begin{align*}
        a(b+h_2) = ab+ah_2.
    \end{align*}
    Since $\text{Tr}(ab)+\text{Tr}(ah_2) = 1$, the resulting Pauli pair is not an edge. A similar argument to that used in part (b) shows that there are $N/2$ transvections to each of the $N/2$ orbits. 
    % Similar to (b), it is equally possible for the resulting Pauli pair to be in any non-edge orbit and the number of transvections that result in each orbit is also given by (\ref{eq:transvection-count-non-edge}).
    
    \item[(d)] Applying a tranvection with Tr$(ah_2)=1$ and Tr$(bh_1)=0$, we obtain
    \begin{align}
    \begin{pmatrix}
        a & 0\\
        0 & b
    \end{pmatrix}\xrightarrow{Z_{(h_1,h_2)}}
    \begin{pmatrix}
        a+h_1 & h_2\\
        0 & b
    \end{pmatrix}
    \end{align}
    and the new orbit invariant is
    \begin{align*}
        (ah_1)b = ab+bh_1.
    \end{align*}
    The same argument used in part (c) shows that there are $N/2$ transvections to each of the $N/2$ orbits.
\end{enumerate}

There are $N^2-1$ transvections, and each is a symplectic matrix that preserves non-edges in the Pauli graph. Case (a) contributes to the diagonal component $\frac{N^2-4}{4(N^2-1)}I_{N/2}$ in $Q_1$ and cases (b), (c), and (d) contribute to $\frac{6N}{4(N^2-1)}J_{N/2}$ in $Q_1$. \qed
\end{proof}

\subsection{The Transvection Markov Chain on Edges}
\begin{theorem}
Consider the Markov process with state space consisting of all edge orbits. 
\edit{Set $M_1 = N-2$ and $M_2=\frac{N-2}{2}$}. Index the first \edit{$M_1$} rows and columns of the state transition matrix $Q_0$ by the type-$1$ orbits, and the remaining \edit{$M_2$} rows and columns by the type-$2$ orbits.
% Suppose that the first $(N-2)/2$ rows and columns of the state transition matrix $Q$ represent type-$1$ orbits, and the remaining $(N-2)/2$ rows and columns represent type-$2$ orbits. 
Then, $Q_0$ is given by
\edit{
\begin{align}\label{eq:transition-edge}
    Q_0=\frac{1}{4(N^2-1)}
    \begin{bmatrix}
    (N^2-4)I_{M_1} & NR^\mathsf{T}\\
    R & (N^2-4)I_{M_2} + 6NJ_{M_2},
    \end{bmatrix}
\end{align}
where $I_{M_1}$ and $I_{M_2}$ are the identity matrices, $J_{M_2}$ is the all ones matrix, $R$ is non-negative, each row sum of $R$ is $6N$, and each column sum of $R$ is $3N$.}
\end{theorem}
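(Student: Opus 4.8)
The plan is to mimic the proof of Theorem~\ref{thm:transition-non-edge}: fix a convenient representative of each orbit, apply a uniformly random non-identity transvection $Z_{(h_1,h_2)}$, and count how the $N^2-1$ transvections distribute among the target orbits. For a type-$1$ edge, I would take the representative $\begin{psmallmatrix} a & az \\ b & bz \end{psmallmatrix}$ with orbit invariant $a/b \in \mathbb{F}_{2^m}\setminus\{0,1\}$; for a type-$2$ edge, the representative $\begin{psmallmatrix} a & 0 \\ 0 & b \end{psmallmatrix}$ with $\mathrm{Tr}(ab)=0$, $ab\neq 0$, orbit invariant $ab$. Using~\eqref{eq:tran}, the image of a representative under $Z_{(h_1,h_2)}$ is obtained by adding $\mathrm{Tr}(ah_2+bh_1)$ (or the analogous trace, depending on which row) times $(h_1,h_2)$ to each row, and then one recomputes the determinant / row-scalar to read off the new orbit invariant. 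The key structural facts to verify are: (i) a transvection applied to a type-$1$ edge can only land in a type-$1$ or a type-$2$ orbit (never a non-edge, since transvections preserve commutativity), giving the block structure of the first row of $Q_0$; (ii) the diagonal block $(N^2-4)I_{M_1}$ comes from the transvections fixing the type-$1$ edge, which are exactly those with the relevant trace equal to $0$ — I would count these to be $\tfrac{N^2}{4}-1$ from the ``fix'' case plus $3(\tfrac{N^2}{4}-1)$ more from sub-cases that happen to leave the row-scalar unchanged, totaling $N^2-4$, exactly as in case (a) combined with the structure seen in the non-edge proof; (iii) the bottom-right block is identical in form to $Q_1$ because a type-$2$ representative $\begin{psmallmatrix} a & 0 \\ 0 & b \end{psmallmatrix}$ with $\mathrm{Tr}(ab)=0$ behaves under transvections exactly as the non-edge case did, except that the ``mixing'' sub-cases (b),(c),(d) now split their mass between the $M_2$ type-$2$ orbits and the $M_1$ type-$1$ orbits — hence the extra $NR^\mathsf{T}$ / $R$ off-diagonal blocks and the same $(N^2-4)I_{M_2}+6NJ_{M_2}$ as before for the portion staying among type-$2$ edges.

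The substantive new content is the matrix $R$ and its row/column sums. I would define $R_{ij}$ = (number of transvections carrying the type-$2$ orbit with invariant (column $j$) to the type-$1$ orbit with invariant (row $i$)), and symmetrically $R^\mathsf{T}$ up to the weight $N$ that accounts for the differing orbit sizes ($N^2-1$ for type-$1$ versus $(N^2-1)N$ for type-$2$, which is exactly the ratio $N$ appearing as the factor on $NR^\mathsf{T}$, making $Q_0$ the genuine reduced transition matrix on orbits with the uniform stationary distribution respecting orbit sizes). To get the row sums of $R$: starting from a fixed type-$2$ representative and summing over all $N^2-1$ transvections, case (a) contributes $0$ to type-$1$ (it fixes the edge, staying type-$2$), and cases (b),(c),(d) each contribute some transvections that convert the type-$2$ edge into a type-$1$ edge — these are precisely the $(h_1,h_2)$ for which the resulting $2\times2$ matrix has determinant $0$. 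Counting these determinant-zero outcomes inside each of cases (b),(c),(d) should give $2N$ each (by an argument parallel to the ``$N/2$ transvections to each of $N/2$ orbits'' counting, but now tracking the degenerate determinant), totaling $6N$ — hence each column sum of $R$ equals $6N$; wait, I need to be careful about which way ``row'' and ``column'' go, so I would fix the convention that summing a column of $R$ over type-$1$ targets equals the total type-$2\to$type-$1$ mass from one type-$2$ orbit, and summing a row of $R$ equals the total mass into one type-$1$ orbit from all type-$2$ orbits. The statement asserts row sum $6N$ and column sum $3N$; I would reconcile this with the size ratio: there are $M_1 = N-2$ type-$1$ orbits and $M_2 = \tfrac{N-2}{2}$ type-$2$ orbits, so consistency of total mass forces $M_1 \cdot (\text{column sum}) = M_2 \cdot (\text{row sum})$, i.e. $(N-2)\cdot 3N = \tfrac{N-2}{2}\cdot 6N$, which checks out — so the plan is to prove one of the two sums directly by the transvection count and deduce the other from this balance identity together with the already-established $6NJ_{M_2}$ block (whose row sums $6N\cdot M_2 + (N^2-4)$ must combine with the $R$ contribution to make each row of $Q_0$ sum to $1$).

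Non-negativity of $R$ is immediate since its entries are counts of transvections. The remaining loose ends are: confirming the type-$1\to$type-$1$ block is exactly diagonal (i.e. a transvection never moves a type-$1$ edge to a \emph{different} type-$1$ orbit) — this should follow because the two rows of a type-$1$ representative are proportional, say row$_2 = \lambda\,$row$_1$, and adding a common multiple of $(h_1,h_2)$ to both rows preserves the relation row$_2 - \lambda\,$row$_1 = 0$ only when the added multiples are themselves in that ratio, so either the edge is fixed or it ceases to be type-$1$; and confirming the count $N^2-4$ for the diagonal entries by the same bookkeeping as the non-edge proof. The main obstacle I anticipate is the precise combinatorial count giving the $6N$ row sum of $R$ — one must carefully enumerate, within each of the trace-parity sub-cases, exactly which transvections produce a determinant-zero image and how those distribute over the $N-2$ type-$1$ orbit invariants; getting the multiplicities right (and matching them against the $3N$ column sum via the orbit-size ratio) is where the real work lies, though it is the same flavor of finite-field counting already carried out in Theorem~\ref{thm:transition-non-edge}. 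The verification that $Q_0$'s rows sum to $1$ provides a useful consistency check throughout.
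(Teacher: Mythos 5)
Your overall plan — reuse the four trace-parity cases from the non-edge proof, pick a type-$1$ representative with proportional rows and the type-$2$ representative $\begin{psmallmatrix} a & 0 \\ 0 & b\end{psmallmatrix}$, show the type-$1\to$type-$1$ block is diagonal, treat the lower-right block like $Q_1$, and then determine $R$ — is the same route the paper takes. But two of your intermediate counts are off by a factor of $4$, and this stems from misreading where the $4$ in the prefactor $\tfrac{1}{4(N^2-1)}$ comes from: it is not a count over four sub-cases, it is simply the algebraic rewriting of $\tfrac{N^2/4-1}{N^2-1}$ as $\tfrac{N^2-4}{4(N^2-1)}$.

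Concretely, for the diagonal block you claim $N^2-4$ transvections fix a type-$1$ orbit, assembled as $\left(\tfrac{N^2}{4}-1\right)+3\left(\tfrac{N^2}{4}-1\right)$. That overcounts by $4$. Only case (a) — both traces zero — keeps a type-$1$ edge in its orbit, and this contributes $\tfrac{N^2}{4}-1$ transvections, giving the diagonal entry $\tfrac{N^2/4-1}{N^2-1}=\tfrac{N^2-4}{4(N^2-1)}$. For a type-$1$ representative such as $\begin{psmallmatrix} a & 0\\ b& 0\end{psmallmatrix}$ a direct determinant check shows cases (b),(c),(d) always produce a nonzero determinant (e.g.\ in case (c) the image is $\begin{psmallmatrix} a&0\\ b+h_1& h_2\end{psmallmatrix}$ with determinant $ah_2\neq 0$ because $h_2=0$ would contradict $\mathrm{Tr}(bh_2)=1$), so none of them stays type-$1$; your appeal to ``sub-cases that happen to leave the row-scalar unchanged'' has no instances. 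Similarly, your estimate of $2N$ determinant-zero transvections per sub-case of the type-$2\to$type-$1$ transition is off by the same factor: in each of (b),(c),(d) the number of $(h_1,h_2)$ making the image determinant vanish is $N/2$ (one linear constraint on the pair), so the total is $3N/2$, and $R$'s row sum is $6N$ only because $R$ is defined as $4(N^2-1)$ times the block of $Q_0$, i.e.\ four times the raw transvection count. With the count of $2N$ you would assemble a transition matrix whose rows fail to sum to $1$, which the consistency check you propose at the end would have caught.

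One genuine difference from the paper: you propose to pin down the upper-right block by directly counting type-$1\to$type-$2$ transitions and invoking the balance identity $M_1\cdot(\text{col sum})=M_2\cdot(\text{row sum})$. The paper instead observes that transvections are involutions, so the upper-right block must be a scalar multiple of $R^\mathsf{T}$, and then extracts the scalar $N$ by imposing that $\mathbf{w}_1=[\tfrac1N,\dots,\tfrac1N,1,\dots,1]$ (the stationary distribution, dictated by orbit sizes) be a fixed left eigenvector. Both are valid; the paper's is cleaner because it avoids a second direct enumeration, and your intuition that the factor $N$ is the ratio of orbit sizes is exactly the content of the detailed-balance relation the paper is implicitly using. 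After fixing the counting errors, the rest of your plan would go through.
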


\begin{proof}
We determine the lower right block of the transition matrix $Q_0$ by making a slight modification to the proof of Theorem~\ref{thm:transition-non-edge}. Starting with Tr$(ab)=0$, we consider $x$ with \edit{Tr$(x)=0$} in cases (b), (c), and (d).
\edit{In each case, if $x=0$, we get a type-$1$ orbit and there are $N/2$ such transvections. Thus, the matrix $R/4$ in the lower left block has row sum $3N/2$. 
% In each case, there are $N/2$ transvections that result in a transition to a type-$1$ orbit, so that the matrix $R/4$ in the lower left block has row sum $3N/2$.
If $x \neq 0$, of which there are $(N-2)/2$ cases, we a get a transition to one of the $(N-2)/2$ type-$2$ orbits and there are still $N/2$ such transvections in each case. This contributes to the $J_{M_2}$ term in the lower right block. Finally, case (a) from Theorem~\ref{thm:transition-non-edge} produces the identity component.} 

We now start with a type-$1$ edge
$\begin{psmallmatrix}
a & 0\\
b & 0
\end{psmallmatrix}$, where $a\neq b$, $ab\neq 0$, and has orbit invariant \edit{$\frac{a}{b}$}, and apply a random transvection $Z_{(h_1,h_2)}$, where $h_1,h_2\in\mathbb{F}_{2^m}$ and $(h_1,h_2)\neq (0,0)$. 
By distinguishing four cases similar to the proof of Theorem~\ref{thm:transition-non-edge}, we notice that the upper left block of $Q_0$, which describes the probablity of transiting from type-$1$ orbits to type-$1$ orbits by transvections, contains only the diagonal component \edit{$\frac{N^2-4}{4(N^2-1)}I_{M_1}$}. 

Since transvections are self-inverse, the upper right block of $Q_0$ must be some scalar multiple of $R^\mathsf{T}$. The transvection Markov chain on edges is irreducible, so by the Perron-Frobenius Theorem,  there is a unique stationary distribution. Since the uniform distribution on edges is stationary, we observe that
\edit{
\begin{align}\label{eq:w_1}
    {\bf w_1} = \begin{bmatrix}
    \frac{1}{N} & \ldots & \frac{1}{N} & 1 & \ldots & 1
    \end{bmatrix} \in\mathbb{R}^{N-2+\frac{N-2}{2}}
\end{align}
is the stationary distribution of the Markov chain on edge orbits. Given that the row sum of $R$ is $6N$, the upper right block is $NR^\mathsf{T}$.} \qed
% We will show that both Markov chains are irreducible, so by the Perron-Frobenius Theorem, there is a unique stationary distribution. Since the uniform distribution is stationary, both processes must converge to the stationary distribution at a rate that depends on the second largest eigenvalue of the corresponding transition matrix.
\end{proof}

\begin{example}
Here $m=3$, and we derive the matrix $R$.
% We give an example of $R$ when $m=3$.
Starting with a type-$2$ edge $\begin{psmallmatrix}
a & 0\\
0 & b 
\end{psmallmatrix}$ for which the orbit invariant is $ab=\alpha$, we consider the following three cases:
% Suppose we first start with a type-$2$ edge $\begin{psmallmatrix}
% a & 0\\
% 0 & b 
% \end{psmallmatrix}$ whose orbit invariant is $ab=\alpha$. 

% Consider the following three non-trivial cases:
\begin{enumerate}
    \item[(a)] Applying a transvection with Tr$(ah_2)=1$ and Tr$(bh_1)=1$, we obtain
    \begin{align}
    \begin{pmatrix}
        a & 0\\
        0 & b
    \end{pmatrix}\xrightarrow{Z_{(h_1,h_2)}}
    \begin{pmatrix}
        a+h_1 & h_2\\
        h_1 & b+h_2
    \end{pmatrix},
    \end{align}
    with determinant
    \begin{align*}
        (a+h_1)(b+h_2)+h_1h_2 = ab+ah_1+bh_2.
    \end{align*}
    When $ah_1+bh_2=ab=\alpha$, the resulting Pauli pair is a type-$1$ edge with orbit invariant
    \edit{
    \begin{align*}
         \frac{a+h_1}{h_1}=\frac{ab+bh_1}{bh_1}=\frac{\alpha}{bh_1}+1.
    \end{align*}
    Since $bh_1$ takes the values $1$, $\alpha^3$, $\alpha^5$, or $\alpha^6$ with equal probability, the corresponding orbit invariants take the values
    \begin{align}
        \frac{\alpha}{1}+1=\alpha^3, \ \frac{\alpha}{\alpha^3}+1=\alpha^4, \ \frac{\alpha}{\alpha^5}+1=\alpha, \text{ or } \frac{\alpha}{\alpha^6}+1=\alpha^6 
    \end{align}}
    with equal probability.
    % $bh_1$ can be $1$, $\alpha^3$, $\alpha^5$, or $\alpha^6$ with equal probability, so the resulting orbit invariants can be
    % \begin{align}
    %     &\frac{\alpha}{1}+1=\alpha^3\equiv \alpha^3 / \alpha^4, \ \frac{\alpha}{\alpha^3}+1=\alpha^4 \equiv \alpha^3 / \alpha^4, \nonumber\\
    %     &\frac{\alpha}{\alpha^5}+1=\alpha \equiv \alpha / \alpha^6, \text{ or } \frac{\alpha}{\alpha^6}+1=\alpha^6 \equiv \alpha / \alpha^6
    % \end{align}
    % with equal probability.
    
    \item[(b)] Applying a transvection with Tr$(ah_2)=0$ and Tr$(bh_1)=1$, we obtain
    \begin{align}
    \begin{pmatrix}
        a & 0\\
        0 & b
    \end{pmatrix}\xrightarrow{Z_{(h_1,h_2)}}
    \begin{pmatrix}
        a & 0\\
        h_1 & b+h_2
    \end{pmatrix}
    \end{align}
    with determinant
    \begin{align*}
        a(b+h_2) = ab+ah_2.
    \end{align*}
    When $ah_2=ab=\alpha$, the resulting Pauli pair is a type-$1$ edge with orbit invarint
    \edit{
    \begin{align*}
        \frac{a}{h_1}=\frac{ab}{bh_1}=\frac{\alpha}{bh_1}.
    \end{align*}}
    
    Again, $bh_1$ takes the values $1$, $\alpha^3$, $\alpha^5$, or $\alpha^6$ with equal probability, so the corresponding orbit invariants take the values
    \edit{
    \begin{align}
        \frac{\alpha}{1} = \alpha, \ \frac{\alpha}{\alpha^3} = \alpha^5, \ \frac{\alpha}{\alpha^5} = \alpha^3 , \text{ or } \frac{\alpha}{\alpha^6} = \alpha^2
    \end{align}}
    with equal probability.
    % Similarly, the resulting orbit invariants can be
    % \begin{align}
    %     &\frac{\alpha}{1} \equiv \alpha / \alpha^6, \ \frac{\alpha}{\alpha^3} \equiv \alpha^2 / \alpha^5, \ \nonumber\\
    %     &\frac{\alpha}{\alpha^5} \equiv \alpha^3 / \alpha^4, \text{ or } \frac{\alpha}{\alpha^6} \equiv \alpha^2 / \alpha^5
    % \end{align}
    % with equal probability. 
    
    \item[(c)] By symmetry, the result of applying a transvection with Tr$(ah_2)=1$ and Tr$(bh_1)=0$ is the same as part (b).
    % When we apply a transvection with Tr$(ah_2)=1$ and Tr$(bh_1)=0$, the occurrence of resulting orbit invariants is the same with that of part (b).
\end{enumerate}

A type-$2$ edge with orbit invariant $\alpha$ is equally likely to transition to any of the type-$1$ orbits. The same conclusion holds for a type-$2$ edge with orbit invariant $\alpha^2$ or $\alpha^4$, and so
% After summing up our counting, each type-$1$ orbit has the same number of occurrences when we start with a type-$2$ edge whose orbit invariant is $\alpha$. We see the same conclusion if we start with the other type-$2$ orbits: $\alpha^2$ and $\alpha^4$. Therefore,
\edit{\begin{align}
    R_3 = \begin{blockarray}{ccccccc}
& \alpha & \alpha^2 & \alpha^3 & \alpha^4 &\alpha^5 & \alpha^6 \\
\begin{block}{c[cccccc]}
  \alpha & 1 & 1 & 1 & 1 & 1 & 1\\
  \alpha^2 & 1 & 1 & 1 & 1 & 1 & 1\\
  \alpha^4 & 1 & 1 & 1 & 1 & 1 & 1\\
\end{block}
\end{blockarray}\times 8.
\end{align}}
% Given $RJ_Q=6NJ_Q$, we conclude that $R=2NJ_Q$.
\end{example}

\begin{remark}
When $m=3$, the matrix $R$ is a scalar multiple of a all ones matrix, but for general $m$, it may not be the case that a type-$2$ orbit is equally likely to transition to all type-$1$ orbits. For example, when $m=4$ and $\alpha$ is a root of $x^4+x+1$. \edit{Suppose we start with a type-2 edge $\begin{psmallmatrix}
a & 0\\
0 & b 
\end{psmallmatrix}$ for which the orbit invariant is $ab=\alpha$, then the number of possible transvections to achieve each type-1 orbits is
\begin{align}
\begin{blockarray}{ccccccccccccccc}
& \alpha & \alpha^2 & \alpha^3 & \alpha^4 &\alpha^5 & \alpha^6 & \alpha^7 & \alpha^8 & \alpha^9 & \alpha^{10} &\alpha^{11} & \alpha^{12} &\alpha^{13} & \alpha^{14} \\
\begin{block}{c[cccccccccccccc]}
\alpha & 1 & 2 & 1 & 1 & 3 & 2 & 2 & 2 & 2 & 3 & 1 & 1 & 2 & 1\\
\end{block}
\end{blockarray}
\end{align}}
% It is worth mentioning that even though transiting from type-$2$ to type-$1$ orbits by transvections is uniform when $m=3$, i.e., $R$ is a scalar multiple of $J_Q$, it is not the case for a general $m$. When $m=4$, through similar calculations, we have
% \begin{align}
%     R_4 = \begin{blockarray}{c[ccccccc]}
%   1 & 2 & 2 & 2 & 2 & 0 & 2 & 2 \\
%   \alpha & 1 & 2 & 1 & 1 & 3 & 2 & 2 \\
%   \alpha^2 & 2 & 1 & 2 & 2 & 3 & 1 & 1 \\
%   \alpha^4 & 1 & 2 & 1 & 1 & 3 & 2 & 2 \\
%   \alpha^8 & 2 & 1 & 2 & 2 & 3 & 1 & 1 \\
%   \alpha^5 & 1 & 3 & 1 & 1 & 0 & 3 & 3 \\
%   \alpha^{10} & 3 & 1 & 3 & 3 & 0 & 1 & 1 \\
% \end{blockarray}\times 8,
% \end{align}
% where the columns (from left to right) represent seven type-$1$ orbits with invariant $\alpha / \alpha^{14}$, $\alpha^2 / \alpha^{13}$, \ldots , $\alpha^7 / \alpha^8$.
\end{remark}

\subsection{Eigenvectors and Eigenvalues}\label{subsec:eigen}

We first discuss the eigenvectors and eigenvalues of $Q_1$. The all-one vector ${\bf x_1}$ of length $N/2$ is an eigenvector of $Q_1$ with eigenvalue $1$, i.e., ${\bf x_1}Q_1 = {\bf x_1}$. Any vector orthogonal to ${\bf x_1}$ is an eigenvector of $J_{N/2}$ with eigenvalue $0$. Then, for any vector ${\bf x_2}$ such that ${\bf x_2}J_{N/2} = 0$, we have
\begin{align}
    {\bf x_2}Q_1 = \frac{N^2-4}{4(N^2-1)}{\bf x_2}.
\end{align}
Therefore, the second largest eigenvalue of $Q_1$ is
\begin{align}
    \lambda_{Q_1} = \frac{N^2-4}{4(N^2-1)}
\end{align}
with multiplicity $\frac{N}{2}-1$.

We proceed to discuss the eigenvectors and eigenvalues of $Q_0$. Define
\edit{\begin{align}
    J'=\frac{1}{4(N^2-1)(N-2)}
    \begin{bmatrix}
    (N^2-4)J_{M_1} & 6N^2 J_{M_1 \times M_2}\\
    6NJ_{M_2\times M_1} & (8N^2-12N-8)J_{M_2}
    \end{bmatrix},
\end{align}
where $J_{M_1\times M_2}$ and $J_{M_2\times M_1}$ are the $M_1\times M_2$ and $M_2\times M_1$ all ones matrices.}
% where the four coefficients of the four all ones matrices are the row sum of each corresponding blocks in $Q_0$. 
By direct calculation we obtain
% Through some matrix calculation, we see that
\begin{align}
    J'Q_0=Q_0J'=J'J'.
\end{align}
Then, for any eigenvector ${\bf w}$ of $J'$ with eigenvalue $\lambda$, we have
\begin{align}
    {\bf w}J'Q_0={\bf w}J'J'
    \Rightarrow \lambda {\bf w} Q_0=\lambda^2{\bf w} \Rightarrow {\bf w} Q_0=\lambda{\bf w}.
\end{align}
Therefore, ${\bf w_1}$ in~\eqref{eq:w_1} 
% \begin{align}
%     {\bf w_1} = \begin{bmatrix}
%     \frac{2}{N} & ... & \frac{2}{N} & 1 & ... & 1
%     \end{bmatrix} \in\mathbb{R}^{\frac{N-2}{2}+\frac{N-2}{2}}
% \end{align}
and
\edit{
\begin{align}
    {\bf w_2} = \begin{bmatrix}
    1 & \ldots & 1 & -2 & \ldots & -2
    \end{bmatrix} \in\mathbb{R}^{N-2+\frac{N-2}{2}}
\end{align}}
are two (left) eigenvectors for both $J'$ and $Q_0$ with corresponding eigenvalues $1$ and $\frac{N^2-6N-4}{4(N^2-1)}$. The eigenvector ${\bf w_1}$ is the stationary distribution of $Q_0$.
% \begin{align}
%     {\bf w_1}Q={\bf w_1}\text{ and } {\bf w_2}Q=\frac{N^2-6N-4}{4(N^2-1)}{\bf w_2}.
% \end{align}

Additionally, for any eigenvector ${\bf v}$ of $Q_0$ with eigenvalue $\lambda$, we have
\begin{align}
    \edit{{\bf v}J'J'}={\bf v}Q_0J'=\lambda{\bf v}J'.
\end{align}
Thus, \edit{either} ${\bf v}J'=0$ or ${\bf v}J'$ is an eigenvector of \edit{$J'$} with eigenvalue $\lambda$. It remains to calculate the eigenvalues associated with eigenvectors ${\bf v}$ of $Q_0$ that satisfy ${\bf v}J'=0$. The action of $Q_0$ on these eigenvectors is given by the matrix
% To bound the second largest eigenvalue of $Q$, we restrict $Q$ to
\edit{\begin{align}
    Q_0'=\frac{1}{4(N^2-1)}
    \begin{bmatrix}
    (N^2-4)I_{M_1} & NR^\mathsf{T}\\
    R & (N^2-4)I_{M_2}
    \end{bmatrix}.
\end{align}}
Setting
\edit{
\begin{align}
    R'=\frac{1}{4(N^2-1)}
    \begin{bmatrix}
    0 & NR^\mathsf{T}\\
    R & 0
    \end{bmatrix},
\end{align}}
we have
\edit{
\begin{align}
    R'^2=\frac{1}{16(N^2-1)^2}
    \begin{bmatrix}
    NR^\mathsf{T}R & 0\\
    0 & NRR^\mathsf{T}
    \end{bmatrix}.
\end{align}
Since the row sum of $R$ is $6N$ and the column sum is $3N$, the largest singular value of $R$ is $3\sqrt{2}N$. Therefore, the largest eigenvalue of $R'^2$ is $\frac{N(3\sqrt{2}N)^2}{16(N^2-1)^2}$. Then, the largest eigenvalue of $R'$ is $\frac{3\sqrt{2}N\sqrt{N}}{4(N^2-1)}$. If $\lambda_{Q_0}$ is the second largest eigenvalue of $Q_0$, then
\begin{align}\label{eq:lambda_Q}
    % \frac{N^2-6N-4}{4(N^2-1)} < 
    \lambda_{Q_0} < \frac{N^2-4+3N\sqrt{2N}}{4(N^2-1)}
\end{align}
and the minimum eigenvalue $\lambda_{\min, Q_0}$ of $Q_0$ is positive when $m\geq 5$ (i.e., $N\geq 32$) since
\begin{align}
    \lambda_{\min, Q_0} > \frac{N^2-4 - 3N\sqrt{2N}}{4(N^2-1)} > 0.
\end{align}}

\section{Convergence Analysis}\label{sec:convergence}
In this section, we prove that the transvection Markov process gives an $\epsilon$-approximate unitary $3$-design and analyze its convergence rate using the second largest eigenvalues obtained in Section~\ref{subsec:eigen}. The proof also receives inspirations from Section 6 in \cite{Harrow-cmp09} and how Webb~\cite{Webb-arxiv16} proved that the Pauli $2$-mixing forms an exact unitary $3$-design. 

\subsection{The Approximate Unitary $k$-Designs}

\begin{definition}
[{\hspace{1sp}\cite[Chapter 9.1.6]{Wilde_2017_book}, \cite[Definition 2.4]{Harrow-cmp09}}]
For quantum channels $\mathcal{N}, \mathcal{M}\in C(\mathcal{X})$, the \emph{diamond-norm distance} is given by
\begin{align}
    \norm{\mathcal{N} - \mathcal{M}}_\diamond = \sup_{\rho} \norm{(\mathcal{N} \otimes I_R)(\rho) - (\mathcal{M} \otimes I_R)(\rho)}_1
\end{align}
where $\text{dim}(R) = \text{dim}(\mathcal{X})$. The diamond norm tells us the distinguishability of two quantum channels in an operational sense. 
\end{definition}

\begin{definition}[{\hspace{1sp}\cite[Definition 2.5]{Harrow-cmp09}}]
$\mathcal{G}_\mathcal{E}$ is an \emph{$\epsilon$-approximate unitary $k$-design} if 
\begin{align}
    \norm{\mathcal{G}_\mathcal{E} - \mathcal{G}_H}_\diamond \leq \epsilon.
\end{align}
\end{definition}

\begin{theorem}\label{thm:sample_transvections}
Random sampling of \edit{$O(\log(N^5/\epsilon))$} transvections, followed by a random element from $\mathfrak{P}_{m} \cong \text{PSL}(2,2^m)$ and a random Pauli matrix $D(a, b)$ gives a $\epsilon$-approximate unitary $3$-design.
\end{theorem}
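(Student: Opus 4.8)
The plan is to reduce the claim to the convergence analysis of the two Markov chains on edge- and non-edge orbits established in Section~\ref{subsec:eigen}, using the standard ``twirl'' characterization of $k$-designs from Theorem~\ref{thm:uni-design}. First I would recall that after applying a random element of $\mathfrak{P}_{m}\cong\text{PSL}(2,2^m)$ followed by a random Pauli $D(a,b)$, the resulting ensemble is already Pauli mixing (Section~\ref{sec:psl}), so the only obstruction to being an exact $3$-design is the failure of Pauli $2$-mixing, which amounts to non-uniformity on edges and non-edges of $\mathbb{P}_N$. Since $\text{PSL}(2,2^m)$ already equidistributes within each orbit, the discrepancy from uniformity is controlled entirely by the distribution over orbits induced by the $t$ random transvections, i.e.\ by $\pi_0 Q_0^{t}$ versus the stationary $\mathbf{w}_1$ for edges (and $\pi_1 Q_1^{t}$ versus the uniform stationary vector for non-edges), where $\pi_0,\pi_1$ are the initial orbit distributions.

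Next I would quantify this. Using the spectral gaps computed in Section~\ref{subsec:eigen} --- the second-largest eigenvalue $\lambda_{Q_1}=\frac{N^2-4}{4(N^2-1)}<\frac14$ of $Q_1$ and the bound $\lambda_{Q_0}<\frac{N^2-4+3N\sqrt{2N}}{4(N^2-1)}$ of $Q_0$ (which is also $<\frac12$ for large $N$, with $\lambda_{\min,Q_0}>0$ for $m\ge5$) --- a standard mixing-time estimate gives that after $t$ transvections the $\ell_1$- (or $\ell_2$-) distance of the orbit distribution from stationarity is at most $C\lambda_{\max}^{t}$ where $\lambda_{\max}=\max(\lambda_{Q_0},\lambda_{Q_1})$ and $C$ absorbs the inverse of the smallest stationary weight, which is of order $N$ (since $\mathbf{w}_1$ has entries as small as $1/N$ among $O(N)$ coordinates). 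I would then translate this orbit-distribution error into a bound on the deviation of the $3$-fold twirl $\mathcal{G}_\mathcal{E}$ from $\mathcal{G}_H$: following Webb~\cite{Webb-arxiv16} and Harrow--Low~\cite{Harrow-cmp09}, one expands $\mathcal{G}_\mathcal{E}-\mathcal{G}_H$ in the Pauli operator basis on $(\mathbb{C}^N)^{\otimes3}$, observes that the only surviving terms correspond to pairs/triples of Paulis, and bounds the diamond norm by a polynomial-in-$N$ factor times the total-variation error on edges and non-edges. Combining, $\norm{\mathcal{G}_\mathcal{E}-\mathcal{G}_H}_\diamond \le \text{poly}(N)\cdot\lambda_{\max}^{t}$, and setting this $\le\epsilon$ gives $t=O(\log(\text{poly}(N)/\epsilon))=O(\log(N^5/\epsilon))$, where the exponent $5$ collects the $N^2$-scale number of edges/non-edges, the $O(N)$ stationary-weight factor, and the operator-norm conversion.

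Finally I would handle the bookkeeping that makes the statement clean: since the transvections form a conjugacy class inside the Clifford group (see~\eqref{eq:transvec_conjugacy}), and since sampling $\mathfrak{P}_{m}$ uniformly is invariant under such conjugation, the intermediate Cliffords produced along the Markov chain can be folded into the single final $\mathfrak{P}_{m}$ element, so the protocol is exactly ``$t$ transvections, then one $\text{PSL}(2,2^m)$ element, then one Pauli,'' as claimed. The main obstacle I expect is the last translation step: carefully showing that a small total-variation error in the \emph{orbit} distribution on edges and non-edges implies a small \emph{diamond-norm} error for the $3$-fold twirl, with an explicitly polynomial (degree-$5$) prefactor. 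This requires unpacking the representation-theoretic decomposition of the twirl --- identifying which components $\mathcal{G}_\mathcal{E}$ and $\mathcal{G}_H$ already agree on (the vertex, i.e.\ Pauli-mixing, part), bounding the operator norm of the discrepancy on the remaining edge/non-edge components, and keeping track of the dimensional factors $N^{\pm1},N^{\pm2}$ that arise from normalizations of the Pauli basis and from the $\diamond$-norm-to-$1$-norm passage. I would also need the crude but sufficient bound $\lambda_{\max}\le\frac34<1$ uniformly in $m$ so that $\lambda_{\max}^{t}$ genuinely decays, and I would remark that the positivity of $\lambda_{\min,Q_0}$ for $m\ge5$ lets us dispense with a lazy-chain modification when bounding $\max_i|\lambda_i|$ by $\lambda_{Q_0}$.
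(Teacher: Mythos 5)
Your proposal follows essentially the same route as the paper: reduce via the Webb/Harrow--Low Pauli-basis expansion (handling the $2$-design-covered terms separately and concentrating on $\rho = p_1\otimes p_2\otimes p_1p_2$) to bounding the diamond norm by a poly-$N$ factor times the $\ell_2$-distance between the $t$-step and stationary orbit distributions of $Q_0,Q_1$, then apply the spectral-gap mixing-time bound $\tau(\epsilon)\le \frac{1}{\Delta}\ln\frac{1}{\pi_*\epsilon}$, and finally fold the intermediate Cliffords into one $\mathfrak{P}_m$ element using the conjugacy-class property of transvections. One small bookkeeping slip: the inverse of the smallest normalized stationary weight $\pi_*$ of $\mathbf{w}_1$ is $\Theta(N^2)$, not $\Theta(N)$, and combined with the target $\ell_2$-error $\epsilon/N^3$ (needed to cancel the $N^6$ factor from the $1$-norm-to-$2$-norm passage on $(\mathbb{C}^N)^{\otimes 6}$) this is precisely what produces the $N^5$ inside the logarithm.
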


We will use the rest of Section~\ref{sec:convergence} to prove Theorem~\ref{thm:sample_transvections}. 
Denote $\overline{P}_N$ as the set of all $E(a, b)$ Pauli matrices. Since $\overline{P}_N$ forms an orthogonal basis for $L(\mathbb{C}^N)$, each linear operator $\rho \in L((\mathbb{C}^N)^{\otimes 6})$ can be written as
\begin{align}
    \rho = \frac{1}{N^3} \sum_{\substack{p_i\in\overline{P}_N, \\ i\in \{1, \ldots, 6\}}} \Gamma(p_1, \ldots, p_6)p_1 \otimes \cdots \otimes p_6,
\end{align}
where
\begin{align}
    \Gamma(p_1, \ldots, p_6) = \frac{1}{N^3} \text{Tr}((p_1 \otimes \cdots \otimes p_6) \rho) \in \mathbb{R}
\end{align}
and
\begin{align}
    \sum_{\substack{p_i\in\overline{P}_N, \\ i\in \{1, \ldots, 6\}}} \Gamma^2(p_1, \ldots, p_6) \leq 1.
\end{align}
Because of this nice expansion into Pauli elements, we only need to understand the effect of $\mathcal{G}_\mathcal{E}$ and $\mathcal{G}_H$ acting on each element of $\overline{P}_N^{\otimes 3}$. Notice that Theorem~\ref{thm:sample_transvections} gives an exact unitary $2$-design since it still acts transitively on vertices. Then, it follows from Webb~\cite[Proof of Lemma 4]{Webb-arxiv16} that when $\rho = p_1 \otimes p_2 \otimes I$ or $\rho = p_1 \otimes p_2 \otimes p_3$ with $p_1p_2p_3 \not\propto I$ (up to permutations of the underlying registers), we have
\begin{align}
    \mathcal{G}_\mathcal{E}(\rho) = \mathcal{G}_H (\rho).
\end{align}
Therefore, we only need to consider the case when $\rho = p_1 \otimes p_2 \otimes p_1p_2$ and $p_1 \neq p_2$: 
\begin{align} \label{eq:appro_proof_mid}
 \norm{\mathcal{G}_\mathcal{E} - \mathcal{G}_H}^2_\diamond = & \sup_{\rho} \norm{(\mathcal{G}_\mathcal{E} \otimes I_R)(\rho) - (\mathcal{G}_H \otimes I_R)(\rho)}^2_1 \nonumber\\
\leq & \edit{N^6} \sup_\rho \norm{(\mathcal{G}_\mathcal{E} \otimes I_R)(\rho) - (\mathcal{G}_H \otimes I_R)(\rho)}_2^2 \nonumber \\
= & \sup_\rho \Biggr\| \sum_{p_i\in\overline{P}_N} \Gamma(p_1, \ldots, p_6) \bigg[ \mathcal{G}_\mathcal{E} (p_1\otimes p_2 \otimes p_3) \otimes p_4 \otimes p_5 \otimes p_6  \nonumber\\
& \hspace{1.25cm} - \mathcal{G}_H (p_1\otimes p_2 \otimes p_3) \otimes p_4 \otimes p_5 \otimes p_6 \bigg] \Biggr\|_2^2 \nonumber \\
= & \sup_\rho \Biggr\| \sum_{\substack{p_i\in\overline{P}_N  \\ p_3 = p_1p_2, p_1 \neq p_2}} \Gamma(p_1, \ldots, p_6) \bigg[ \mathcal{G}_\mathcal{E} (p_1\otimes p_2 \otimes p_1p_2) %\otimes p_4 \otimes p_5 \otimes p_6 
\nonumber\\
& \hspace{1.25cm} - \mathcal{G}_H (p_1\otimes p_2 \otimes p_1p_2) \bigg] \otimes p_4 \otimes p_5 \otimes p_6 \Biggr\|_2^2.
\end{align}
Define
\begin{align}
    \mathcal{E}_{(p_1, p_2) \to (q_1, q_2)} = \{(\alpha, U)\in\mathcal{E} \colon Up_1U^\dagger = q_1, Up_2U^\dagger = q_2\}.
\end{align}
Suppose the transition matrices for the Markov chains of sampling random transvections on all edges and non-edges are $P_0$ and $P_1$ respectively. Then, if we start from the Pauli pair $(p_1, p_2)$, the probability of reaching $(q_1, q_2)$ after $t$ steps on the chain is
\begin{align}
    g_t^{[p_1, p_2]} (q_1, q_2; p_1, p_2) = {\bf e}_{(p_1, p_2)} P_{[p_1, p_2]}^t {\bf e}^\mathsf{T}_{(q_1, q_2)},
\end{align}
where $[p_1, p_2] = 0$ if $p_1$ and $p_2$ commute, otherwise $[p_1, p_2] = 1$ and ${\bf e}_{(p_1, p_2)}$ is a unit row vector where only position $(p_1, p_2)$ is non-zero. Then, for the ensemble $\mathcal{E}$ generated after $t$ steps on the corresponding Markov chain, we have
\begin{align}
\mathcal{G}_\mathcal{E} (p_1\otimes p_2 \otimes p_1p_2) =& \sum_{(\alpha, U)\in\mathcal{E}} \alpha U p_1 U^\dagger \otimes U p_2 U^\dagger \otimes U p_1 U^\dagger Up_2 U^\dagger \nonumber\\
= & \sum_{\substack{q_1, q_2 \in\overline{P}_N \colon \\ [q_1, q_2] = [p_1, p_2]}} \sum_{(\alpha, U)\in \mathcal{E}_{(p_1, p_2)\to(q_1, q_2)}} \alpha q_1 \otimes q_2 \otimes q_1q_2 \nonumber\\
= & \sum_{\substack{q_1, q_2 \in\overline{P}_N \colon \\ [q_1, q_2] = [p_1, p_2]}} g_t^{[p_1, p_2]} (q_1, q_2; p_1, p_2) q_1 \otimes q_2 \otimes q_1q_2
\end{align}
and
\begin{align}
\mathcal{G}_H (p_1\otimes p_2 \otimes p_1p_2) = & \sum_{\substack{q_1, q_2 \in\overline{P}_N \colon \\ [q_1, q_2] = [p_1, p_2]}} g_\infty^{[p_1, p_2]} (q_1, q_2; p_1, p_2) q_1 \otimes q_2 \otimes q_1q_2.
\end{align}
Then, by continuing~\eqref{eq:appro_proof_mid} and following the orthogonality of the Pauli operators under the Hilbert-Schmidt inner product, we obtain
\begin{align}\label{eq:appro_proof_last}
\norm{ \mathcal{G}_\mathcal{E} - \mathcal{G}_H }^2_\diamond 
= & \sup_\rho \Biggr\| \sum_{\substack{p_i\in\overline{P}_N \\ [p_1, p_2] = 0}} \Gamma(p_1,\ldots, p_6)\sum_{\substack{q_1, q_2 \in\overline{P}_N \colon \\ [q_1, q_2] = 0}} \bigg[ g_t^0 (q_1, q_2; p_1, p_2)  \nonumber\\
& \hspace{1cm} - g_\infty^0 (q_1, q_2; p_1, p_2) \bigg] q_1\otimes q_2 \otimes q_1q_2 \otimes p_4 \otimes p_5 \otimes p_6 \nonumber\\
& + \sum_{\substack{p_i\in\overline{P}_N \\ [p_1, p_2] = 1}} \Gamma(p_1,\ldots, p_6)\sum_{\substack{q_1, q_2 \in\overline{P}_N \colon \\ [q_1, q_2] = 1}} \bigg[ g_t^1 (q_1, q_2; p_1, p_2)  \nonumber\\
& \hspace{1cm} - g_\infty^1 (q_1, q_2; p_1, p_2) \bigg] q_1\otimes q_2 \otimes q_1q_2 \otimes p_4 \otimes p_5 \otimes p_6 \Biggr\|_2^2 \nonumber\\
= & \edit{N^6}\sup_\rho  \biggr[ \sum_{ [p_1, p_2] = 0} \Gamma^2(\cdots)\sum_{[q_1, q_2] = 0} (g_t^0 (\cdots) - g_\infty^0 (\cdots) )^2  \nonumber\\
& \hspace{1cm} + \sum_{ [p_1, p_2] = 1} \Gamma^2(\cdots)\sum_{[q_1, q_2] = 1} (g_t^1 (\cdots) - g_\infty^1 (\cdots) )^2 \biggr].
\end{align}
% Notice that both 
% \begin{align} \label{eq:g_t_infty^0}
%     \sum_{[q_1, q_2] = 0} (g_t^0 (\cdots) - g_\infty^0 (\cdots) )^2
% \end{align}
% and 
% \begin{align}\label{eq:g_t_infty^1}
%     \sum_{[q_1, q_2] = 1} (g_t^1 (\cdots) - g_\infty^1 (\cdots) )^2
% \end{align}
% can be bounded by using the second largest eigenvalues of $Q_0$ and $Q_1$ defined in~\eqref{eq:transition-edge} and~\eqref{eq:transition-non-edge}. 
\edit{We shall show in Section \ref{sec:orbits_mixing_time} that when $t=O(\log(N^5/\epsilon))$, we have 
\begin{align} \label{eq:g_t_infty^0}
    \sum_{[q_1, q_2] = 0} (g_t^0 (\cdots) - g_\infty^0 (\cdots) )^2 < \left(\frac{\epsilon}{N^3}\right)^2
\end{align}
and 
\begin{align}\label{eq:g_t_infty^1}
    \sum_{[q_1, q_2] = 1} (g_t^1 (\cdots) - g_\infty^1 (\cdots) )^2 < \left(\frac{\epsilon}{N^3}\right)^2
\end{align}
using the second largest eigenvalues of $Q_0$ and $Q_1$ defined in~\eqref{eq:transition-edge} and~\eqref{eq:transition-non-edge}. 
Therefore, we can prove Theorem~\ref{thm:sample_transvections} by further simplifying~\eqref{eq:appro_proof_last} into
\begin{align}
\Vert \mathcal{G}_\mathcal{E} - \mathcal{G}_H\Vert^2_\diamond \leq N^6\sup_\rho \sum \Gamma^2(\cdots) \left(\frac{\epsilon}{N^3}\right)^2 \leq \epsilon^2.
\end{align}}

\subsection{Orbits Mixing Time}\label{sec:orbits_mixing_time}

We first find the connection between the probability distribution of the Markov chain on edges and that of edge orbits. Suppose 
\edit{\begin{align}\label{eq:general_x}
    {\bf x} = [x_1, \ldots, x_{N-2}, x_{N-1}, \ldots, x_{\frac{3}{2}(N-2)}]
\end{align}}
is a probability distribution on all edge orbits. Sampling Clifford elements will distribute the probability mass equally among all the edges within each orbit. Since each type-$1$ orbit has $N^2-1$ elements and \edit{each type-$2$ orbit has $(N^2-1)N$ elements}, we define the corresponding probability distribution on all edges after sampling a random Clifford element as
\edit{\begin{align}
    f({\bf x}) = \left[\frac{x_1}{N^2-1}, \ldots, \frac{x_{N/2}}{(N^2-1)N}, \ldots\right],
\end{align}
where each entry $\frac{x_i}{N^2 - 1}$ is repeated $N^2-1$ times, for $1 \leq i \leq (N-2)$, and each entry $\frac{x_i}{(N^2-1)N}$ is repeated $(N^2-1)N$ times, for $i \geq N-1$.
Then, we see that
\begin{align}
\Vert f({\bf x}) - f({\bf y})\Vert_1 =& \sum_{i=1}^{(N^2-1)(N^2-4)/2} |f({\bf x})_i - f({\bf y})_i| \nonumber \\
=& \sum_{i=1}^{N-2} (N^2-1) \left|\frac{x_i}{N^2-1} - \frac{y_i}{N^2-1}\right| \nonumber \\
& \hspace{0.5cm} + \sum_{i=N-1}^{\frac{3}{2}(N-2)} (N^2-1)N \left|\frac{x_i}{(N^2-1)N} - \frac{y_i}{(N^2-1)N}\right| \nonumber\\
=& \sum_{i=1}^{\frac{3}{2}(N-2)} |x_i - y_i| = \Vert {\bf x} - {\bf y}\Vert_1.
\end{align}}

\begin{definition}Let $Q$ be the transition matrix of an irreducible and aperiodic Markov process whose stationary distribution is $\pi$. The mixing time $\tau$ is given by
\begin{align}
    \tau(\epsilon) = \max_{{\bf s}\colon \Vert {\bf s}\Vert_1 = 1} \min_{t} \{t\geq 0 \colon \Vert {\bf s}Q^t - \pi\Vert_1 \leq \epsilon\}.
\end{align}
\end{definition}

\begin{theorem}[{\hspace{1sp}\cite[Theorem 4.5]{Harrow-cmp09}}]
The mixing time can be bounded above as
\begin{align}
    \tau(\epsilon) \leq \frac{1}{\Delta} \ln{\frac{1}{\pi_* \epsilon}},
\end{align}
where $\pi_* = \min \pi(x)$ and $\Delta = \min(1-\lambda_2, 1+\lambda_{\min} )$.
Here, $\lambda_2$ is the second largest eigenvalue and $\lambda_{\min}$ is the smallest. 
\end{theorem}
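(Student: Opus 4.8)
The plan is to prove this by the classical spectral-gap (``Poincar\'e'') argument for reversible chains --- which is precisely the setting we need, since both $Q_0$ and $Q_1$ are reversible with respect to their stationary distributions ($Q_1$ is symmetric, and $Q_0$ satisfies detailed balance with respect to ${\bf w_1}$ because the factor $N$ relating the type-$1$ and type-$2$ weights in ${\bf w_1}$ exactly cancels the factor $N$ between the blocks $NR^{\mathsf{T}}$ and $R$ in~\eqref{eq:transition-edge}). Reversibility lets me work on the weighted space $\ell^2(\pi)$ with inner product $\langle f,g\rangle_\pi = \sum_x \pi(x) f(x) g(x)$, on which the transition operator $(Pf)(x) = \sum_y Q(x,y) f(y)$ is self-adjoint. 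Hence $P$ has real eigenvalues $1 = \lambda_1 > \lambda_2 \ge \cdots \ge \lambda_{\min}$, with $\lambda_1 > \lambda_2$ by irreducibility and $\lambda_{\min} > -1$ by aperiodicity, together with a $\langle\cdot,\cdot\rangle_\pi$-orthonormal eigenbasis whose top eigenfunction is the constant function ${\bf 1}$; the operator norm of $P$ restricted to ${\bf 1}^{\perp}$ is exactly $\max(\lambda_2, -\lambda_{\min}) = 1-\Delta$.

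Next I would push an arbitrary initial distribution ${\bf s}$ through its relative density $h_0(x) := {\bf s}(x)/\pi(x) - 1$. Using detailed balance one checks that ${\bf s}Q^t$ has relative density $h_t = P^t h_0$. Two facts then finish the estimate. First, $\langle h_0, {\bf 1}\rangle_\pi = \sum_x {\bf s}(x) - \sum_x \pi(x) = 0$, so $h_0 \in {\bf 1}^{\perp}$ and consequently $\norm{h_t}_{2,\pi} \le (1-\Delta)^t \norm{h_0}_{2,\pi}$. Second, $\norm{h_0}_{2,\pi}^2 = \sum_x {\bf s}(x)^2/\pi(x) - 1 \le \pi_*^{-1}\sum_x {\bf s}(x)^2 - 1 \le \pi_*^{-1} - 1 < \pi_*^{-1}$, where the last two steps use $0 \le {\bf s}(x) \le 1$ and $\sum_x {\bf s}(x) = 1$.

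Finally I would convert the $\ell^2(\pi)$ bound into the total-variation bound by Cauchy--Schwarz: $\norm{{\bf s}Q^t - \pi}_1 = \sum_x \pi(x)\,\lvert h_t(x)\rvert \le \big(\sum_x \pi(x)\big)^{1/2}\norm{h_t}_{2,\pi} = \norm{h_t}_{2,\pi} \le (1-\Delta)^t\,\pi_*^{-1/2} \le e^{-\Delta t}\,\pi_*^{-1/2}$, using $1-\Delta \le e^{-\Delta}$. This is at most $\epsilon$ once $\Delta t \ge \ln\!\big(1/(\epsilon\sqrt{\pi_*})\big) = \ln(1/\epsilon) + \tfrac12\ln(1/\pi_*) \le \ln\!\big(1/(\pi_*\epsilon)\big)$; taking the maximum over ${\bf s}$ and the least such $t$ then yields $\tau(\epsilon) \le \Delta^{-1}\ln\!\big(1/(\pi_*\epsilon)\big)$. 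The steps I expect to require the most care are (i) confirming that the norm of $P$ on ${\bf 1}^{\perp}$ is precisely $1-\Delta$, so that the single quantity $\Delta = \min(1-\lambda_2,\,1+\lambda_{\min})$ simultaneously controls the slow-mixing and the near-periodic directions, and (ii) the bookkeeping of the stationary weights, so that only the factor $\pi_*^{-1/2}$ --- not $\pi_*^{-1}$ --- survives the $\ell^2\!\to\!\ell^1$ passage, which is exactly what lets the $\tfrac12$ be absorbed into the logarithm. The sole structural hypothesis here is reversibility; for a general non-reversible but diagonalizable chain the same scheme goes through at the cost of a multiplicative factor equal to the condition number of the eigenvector matrix.
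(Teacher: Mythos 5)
Your proof is correct, and it is essentially the standard $\ell^2(\pi)$ spectral-gap argument that Harrow and Low use to establish their Theorem 4.5; the paper itself cites that result without reproducing a proof, so the comparison is really with the cited source, and there your approach matches: pass to the relative density $h_0 = {\bf s}/\pi - 1 \in {\bf 1}^\perp$, contract by $(1-\Delta)^t$ using the self-adjointness of $P$ on $\ell^2(\pi)$, bound $\norm{h_0}_{2,\pi} \le \pi_*^{-1/2}$, and finish with Cauchy--Schwarz and $1-\Delta \le e^{-\Delta}$. One thing you do that the paper does \emph{not} do explicitly is verify the reversibility hypothesis that this theorem needs: your observation that $Q_1$ is symmetric and that the factor of $N$ between the off-diagonal blocks $N R^{\mathsf T}$ and $R$ of $Q_0$ in~\eqref{eq:transition-edge} is exactly compensated by the $1/N$ vs.\ $1$ weights in ${\bf w_1}$ (so that detailed balance holds) is correct and is a genuine gap in the paper's exposition that your write-up fills. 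Two minor cosmetic points: your intermediate bound actually gives the slightly sharper $\tau(\epsilon) \le \Delta^{-1}\ln\bigl(1/(\epsilon\sqrt{\pi_*})\bigr)$, which you then relax to the stated form using $\pi_* \le 1$; and the identity $\max(|\lambda_2|,|\lambda_{\min}|) = \max(\lambda_2,-\lambda_{\min}) = 1-\Delta$ that you use holds regardless of the sign of $\lambda_2$, so no case split is needed there.
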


Therefore, following from~\eqref{eq:w_1} and~\eqref{eq:lambda_Q}, the mixing time for $Q_0$ is bounded by
\edit{\begin{align}
    t_{\min} = \tau\left(\frac{\epsilon}{N^3}\right) \leq & \frac{1}{1-\lambda_{Q_0}}\ln{\frac{1}{\min\left(\frac{{\bf w}_1}{\Vert {\bf w}_1 \Vert_1}\right) \frac{\epsilon}{N^3}}} \nonumber\\
    \leq & \frac{1}{1-\frac{N^2-4+3N\sqrt{2N}}{4(N^2-1)}}\ln{\frac{1}{ \frac{\epsilon}{N^3} (\frac{1}{N})/{(\frac{N^2-4}{2N})} }} \nonumber\\
    \approx & \frac{1}{1-1/4} \ln{\frac{N^3(N^2-4)}{2\epsilon}} = O(\log (N^5/\epsilon))
\end{align}\cite{Singal-21}}
and $t_{\min}$ is larger than the mixing time of $Q_1$ defined in~\eqref{eq:transition-non-edge}. Since
\begin{align}
    \Vert f({\bf x}) - f({\bf y})\Vert_2 \leq \Vert f({\bf x}) - f({\bf y})\Vert_1 = \Vert {\bf x} - {\bf y}\Vert_1, 
\end{align}
both~\eqref{eq:g_t_infty^0} and~\eqref{eq:g_t_infty^1} can be upper bounded by \edit{$\left(\frac{\epsilon}{N^3}\right)^2$}. \edit{Theorem \ref{thm:sample_transvections} is thus proved. }

% We can thus prove Theorem~\ref{thm:sample_transvections} by further simplifying~\eqref{eq:appro_proof_last} into
% \begin{align}
% \Vert \mathcal{G}_\mathcal{E} - \mathcal{G}_H\Vert^2_\diamond \leq {\color{red}N^6}\sup_\rho \sum \Gamma^2(\cdots) \left(\frac{\epsilon}{N^3}\right)^2 \leq \epsilon^2.
% \end{align}

%----------------------------------------------------------------------------------------
%	SECTION
%----------------------------------------------------------------------------------------
\section{Conclusion}\label{sec:conclusion}

In this paper, we have proved that we can obtain an $\epsilon$-approximate unitary $3$-design by random sampling \edit{$O(\log(N^5/\epsilon))$} transvections followed by a random Clifford $\text{PSL}(2, 2^m)$ element and a random Pauli matrix. 

The key to an exact unitary $3$-design is the transitivity on the Pauli element pairs of the same commutativity. The transvection Markov process that we designed exactly enables all Pauli pairs to converge to such a uniform stationary distribution. We first start from a well-established unitary $2$-design, a symplectic subgroup isomorphic to $\text{PSL}(2, 2^m)$ based on Kerdock sets. This unitary $2$-design acts transitively on Pauli elements, but partitions the Pauli pairs into multiple orbits. Using a finite field representation of Paulis, and the fact that transvections generate the symplectic group, we characterize the orbits of Pauli pairs and analyze how each transvection acts on the orbits. 
Finally, we analyze the convergence rate to an $\epsilon$-approximate $3$-design using the second largest eigenvalues of the transition matrices of the ``edge''- and ``orbit''-Markov chains. 

Interestingly, when we look at the history of related research, scientists have already discovered that the Clifford group forms a unitary $2$-design early in 2001~\cite{DiVincenzo-it02}. But it was not proved until 2016 that it is also a unitary $3$-design and it is the minimal $3$-design except for dimension $4$~\cite{Zhu-aps17}~\cite{Webb-arxiv16}. Our work fills in the gap by constructing the $3$-design using the exact building blocks provided by the $2$-design. 

Given that M{\o}lmer-S{\o}rensen gates (i.e., $2$-qubit transvections~\cite{Rengaswamy-phd20}) are native gates in trapped-ion systems, it would be interesting to investigate if employing only these transvections in our Markov chain suffices to produce an approximate unitary $3$-design.
%Moreover, this Markov chain approximation framework may be generalized to create or study the relationship between unitary designs of higher degrees.
Moreover, it might be possible to extend this Markov chain approximation framework to create unitary $k$-designs for $t > 3$ by mixing the Kerdock design with, perhaps, a generalization of symplectic transvections.

\section*{Acknowledgments}
% We are grateful to Tanmay Singhal and Min-Hsui Hsieh for pointing out an error in the initial version of this paper, and for suggesting how it might be corrected by defining the Pauli graph as a directed graph.
% The authors would like to thank Tanmay Singal and Min-Hsui Hsieh for their valuable opinions on this work.
\edit{We are grateful to Tanmay Singal and Min-Hsiu Hsieh for pointing out that we needed to define the Pauli graph as a directed graph in order to properly analyze the Transvection Markov Chain. They also communicated that it is possible to improve the convergence analysis to $O(\log (N^3/\epsilon))$ \cite{Singal-21}.}

The work of N. Rengaswamy and R. Calderbank was supported in part by the National Science Foundation (NSF) under Grant no. 1908730.

\bibliographystyle{IEEEtran}
\bibliography{WCLabrv,WCLnewbib}

\end{document}